\documentclass[11pt]{article}
\pdfoutput=1

\usepackage{amsmath, amssymb,amsthm}
\usepackage{comment}
\newtheorem{theorem}{Theorem}

\newtheorem{prop}[theorem]{Proposition}

\theoremstyle{definition}

\numberwithin{equation}{section}

\usepackage{euscript}
\usepackage{amsfonts}
\usepackage{amsbsy}
\usepackage{epsfig}
\usepackage{amsthm}
\usepackage{amscd}
\usepackage{amstext}
\usepackage{verbatim}
\usepackage{cancel}
\usepackage{capt-of}
\usepackage{empheq}

\usepackage[nosort]{cite}
\usepackage{bm}
\usepackage{authblk}
\usepackage{esint}
\usepackage[T1]{fontenc}
\usepackage{mathdots}
\usepackage[centertableaux]{ytableau}

\usepackage[utf8]{inputenc}
\usepackage{graphicx}
\usepackage{physics}
\usepackage{mathtools}
\usepackage{bbold}

\usepackage{setspace}
\usepackage{colortbl}
\usepackage{xcolor}

\usepackage{amsmath}
\makeatletter
\renewcommand*\env@matrix[1][\arraystretch]{%
  \edef\arraystretch{#1}%
  \hskip -\arraycolsep
  \let\@ifnextchar\new@ifnextchar
  \array{*\c@MaxMatrixCols c}}
\makeatother
\usepackage{amssymb}
\usepackage{mathrsfs}
\usepackage{booktabs}
\usepackage{bbm}
\usepackage{float}

\usepackage{pgfplots}
\pgfplotsset{compat=1.15}
\usepackage{tikz}
\usetikzlibrary{external}
\usetikzlibrary{arrows}

\usepackage{subcaption}
\usepackage{graphicx}
\usepackage{mathtools}
\usepackage[many]{tcolorbox}
\tcbset{shield externalize}
\usepackage{xcolor, colortbl}
\usepackage[a4paper]{geometry}
\usepackage[hidelinks,linktocpage]{hyperref}

\hypersetup{
    colorlinks=true,
    linkcolor=blue,
    citecolor=blue,
    }

\usepackage{mathrsfs}

\def\ben{\begin{equation}}
\def\een{\end{equation}}

\let\a=\alpha \let\b=\beta \let\g=\gamma  
   \let\k=\kappa
\let\l=\lambda  \let\n=\nu

\let\X=\Xi   
\let\C=\Chi

\def\be{\begin{equation}}
\def\ee{\end{equation}}
\def\beq{\begin{equation}}
\def\eeq{\end{equation}}
\def\ba{\begin{array}}
\def\ea{\end{array}}

\def\dalemb#1#2{{\vbox{\hrule height .#2pt
       \hbox{\vrule width.#2pt height#1pt \kern#1pt
               \vrule width.#2pt}
       \hrule height.#2pt}}}

\newcommand{\bea}{\begin{eqnarray}}
\newcommand{\eea}{\end{eqnarray}}

\makeatletter
\newcommand*\bigcdot{\mathpalette\bigcdot@{.5}}
\newcommand*\bigcdot@[2]{\mathbin{\vcenter{\hbox{\scalebox{#2}{$\m@th#1\bullet$}}}}}
\makeatother

\def\R{{{\mathbb R}}}
\def\C{{{\mathbb C}}}

\def\RP{{{\mathbb R}{\mathbb P}}}

\title{Charged and rotating near-horizon geometries in five dimensions}
\author{Alex Colling\thanks{Email: \texttt{aec200@cam.ac.uk}} \ and Jun~Liu\thanks{Email: \texttt{jl2296@cam.ac.uk}}}
\affil{\it Department of Applied Mathematics and Theoretical Physics, \\
\it University of Cambridge, Cambridge CB3 0WA, UK
}
\date{}

\begin{document}

\maketitle

\begin{abstract}

We present new charged and rotating near-horizon geometries in five-dimensional Einstein-Maxwell theory in closed analytic form. The solutions can be parametrised by the charge and two independent angular momenta. We also generalise these near-horizon geometries to theories with an additional Chern-Simons term in the action multiplied by an arbitrary coupling constant. The new solutions have the same entropy relations as expected for charged  versions of extremal Myers-Perry black holes and for rotating versions of extremal Reissner-Nordstr\"om-Tangherlini black holes, but they do not reduce to the Myers-Perry horizon in the vacuum limit. The horizon cross-sections are spherical and carry a Sasakian structure. We exploit this structure to prove a characterisation of our solutions: without any symmetry assumptions, they are the most general rotating extremal horizons for which the co-rotating electric field is a (non-zero) constant. We further extend this construction to higher dimensions, where we show that any Sasaki-Einstein manifold generates a two-parameter family of charged and rotating horizons. 

\end{abstract}
\newpage

\tableofcontents
\section{Introduction}
A generalisation of the Kerr-Newman family of black holes to five-dimensional Einstein-Maxwell theory is not analytically known. This family of solutions should depend on four parameters: a mass, two independent angular momenta, and a charge. However, only special limits with either zero charge (Myers-Perry \cite{Myers:1986un}) or zero angular momenta (Reissner-Nordstr\"om-Tangherlini \cite{Tangherlini:1963bw}) are known in closed form. Charged and rotating solutions have only been constructed numerically \cite{Kunz:2005nm, Horowitz:2024kcx} or perturbatively \cite{Aliev:2006yk}, see also \cite{Deshpande:2024vbn}. If one allows for an additional Chern-Simons term and fixes the Chern-Simons coupling to align with the bosonic part of five-dimensional minimal supergravity, a four-parameter family of solutions interpolating between these two limits is known analytically (Chong-Cvetic-L\"u-Pope \cite{Chong:2005hr}).  

Restricting to extremal black holes, instead of the full black hole spacetime one can attempt to construct just the near-horizon geometry describing the intrinsic geometry of the (extremal) event horizon. This is a more tractable problem due to the presence of additional symmetries: the near-horizon geometry is invariant under a $SO(2,1)$ action preserving the Maxwell field \cite{Kunduri:2007vf, Dunajski:2023xrd, Colling:2024txz, Colling:2025dub}. The five-dimensional extremal Kerr-Newman horizon should depend on three parameters (a charge and two angular momenta), with the mass determined by the extremality condition. Cross-sections of the horizon are topologically three-spheres\footnote{Various charged and rotating extremal black holes and near-horizon geometries with horizon topology $S^2 \times S^1$ have been constructed, see e.g. \cite{Emparan, Kunduri:2011zr}. We will focus on $S^3$ topology for most of this work.} $S^3$. Perhaps somewhat surprisingly, just like for the full spacetime only special cases of such a three-parameter family of  near-horizon geometries are known analytically. These are the following \cite{Kunduri:2013gce}:
\begin{itemize}
    \item \textbf{Vacuum}: for vanishing Maxwell field, the solution should reduce to the two-parameter family of extremal Myers-Perry horizons. In fact, the Myers-Perry horizons are part of a three-parameter family of known vacuum near-horizon geometries with $U(1)^2$-symmetric spherical cross sections \cite{KLvac, Hollands:2010bf} that also contain the horizons of the Rasheed black holes (with Kaluza-Klein asymptotics)\cite{KK}. 
    \item \textbf{Static}: there is a known two-parameter family of static horizons arising from static extremal black holes in a background electric field \cite{KL}. This family contains the extremal Reissner-Nordstr\"om-Tangherlini horizon as a special case and is the most general static near-horizon geometry with spherical $U(1)^2$-symmetric cross-sections.
    
    \item \textbf{Homogeneous}: when the two angular momenta are equal, the horizon cross-sections are homogeneous with isometry group containing $U(2)$. There are two such two-parameter families of near-horizon geometries \cite{Kunduri:2013gce, homog2}. One family has a vacuum limit (homogeneous Myers-Perry), and  the other family has a static limit (Reissner-Nordstr\"om-Tangherlini). The two families meet at a common one-parameter subfamily.
\end{itemize}
In the presence of a Chern-Simons term with the particular coupling corresponding to five-dimensional minimal supergravity, many more spherical near-horizon geometries are known. This includes the extremal charged Myers-Perry horizons in \cite{Chong:2005hr} and the horizons of charged Kaluza-Klein black holes \cite{six}. However, within five-dimensional pure Einstein-Maxwell theory, to the best of our knowledge, there is no analytically known charged and rotating spherical near-horizon geometry with two independent angular momenta.

In a recent work \cite{Horowitz:2024kcx}, Horowitz and Santos numerically constructed multiple three-parameter families of solutions with independent angular momenta and charge for any value of the Chern-Simons coupling. These solutions admit a cohomogeneity-one $U(1)^2$ isometric action on the cross-sections and generalise the homogeneous horizons in \cite{Kunduri:2013gce}. For vanishing Chern-Simons coupling, their solutions interpolate in two patches between the static and vacuum limits above, and therefore are expected to represent the near-horizon geometry of the extremal Kerr-Newman family in five dimensions. Remarkably, the numerical solutions satisfy two simple closed form entropy relations, hinting at the existence of closed analytic expressions for the near horizon geometries. We present these entropy relations later in equations \eqref{eq:entropy1} and \eqref{eq:Ent1}.

\subsection{Summary of results}

In this work we present new charged and rotating near-horizon geometries in five-dimensional Einstein-Maxwell-Chern-Simons theory. For zero Chern-Simons coupling, we construct two three-parameter families of solutions with $U(1)^2$ symmetry interpolating between the static horizons in \cite{KL} and a two-parameter family of vacuum horizons. The known homogeneous solutions are recovered by setting the angular momenta equal. 

To present the solutions, we use  coordinates $y \in [0,1]$, $\phi \sim \phi + 2\pi$, and $\psi \sim \psi + 2\pi$ on the spherical cross-sections. Details on conventions are explained below in \S \ref{sec:known}. Family 1 (details in \S \ref{sec:fam1}) has a non-rotating limit ($q^2 = c_1 c_2$ below) coinciding with the static squashed horizons in \cite{KL}. The metric and Maxwell field are
\begin{subequations}\label{eq:gF}
\begin{multline}
g = \frac{\Gamma}{4} \left(-\rho^2 \dd t^2 + \frac{\dd \rho^2}{\rho^2} \right)
+ \frac{\Gamma}{4y(1-y)} \text{d}y^2
+ \frac{c_1^2 \left[q^2 (1-y)+ c_1^2 y \right]}{\Gamma} (1-y) E_\phi^2\\
- \frac{2 c_1c_2 (c_1c_2 - q^2)}{\Gamma}y(1-y)E_\phi E_\psi
+ \frac{c_2^2 \left[c_2^2 (1-y) + q^2 y \right]}{\Gamma} y E_\psi^2,
\label{eq:g}
\end{multline}

\begin{equation}
F = \frac{\sqrt{3}\,q}{4}\text{d}t \wedge \text{d}\rho
+ \frac{\sqrt{3} \sqrt{c_1 c_2 - q^2}}{2 \Gamma}
\left( c_1^2  \text{d}y \wedge E_\phi
- c_2^2 \text{d}y \wedge E_\psi \right).
\label{eq:F}
\end{equation}
with parameters $c_1, c_2>0$, $c_1 c_2 \geq q^2$, and 
\be
    \Gamma = [c_1 y + c_2 (1-y)]^2, \quad
    E_\phi = \dd \phi + \frac{c_2 \sqrt{c_1 c_2 - q^2}}{2c_1 q} \rho \dd t, \quad
     E_\psi = \dd \psi + \frac{c_1\sqrt{c_1 c_2 - q^2}}{2c_2 q} \rho \dd t.
\ee
\end{subequations}
Family 2 (details in \S \ref{sec:fam2}) has a zero charge limit  ($q^2 = 2 c_1 c_2$ below) intersecting the three-parameter family of vacuum solutions at a two-parameter slice. However, this vacuum slice is not the extremal Myers-Perry slice, but a different two-parameter slice containing Kaluza-Klein black hole horizons. Nonetheless, perhaps surprisingly, these solutions also satisfy the simple entropy relations anticipated in \cite{Horowitz:2024kcx} for geometries connected to the extremal Myers-Perry horizon. The metric takes the same form as \eqref{eq:g}, only $E_\phi$, $E_\psi$ and $F$ are different:
\begin{subequations}\label{eq:gF2}
\begin{equation}
F = \frac{\sqrt{2c_1c_2-q^2}}{4}\text{d}t \wedge \text{d}\rho
+ \frac{\sqrt{2c_1c_2-q^2}}{2 \Gamma}
\left( c_1^2  \text{d}y \wedge E_\phi
- c_2^2 \text{d}y \wedge E_\psi \right).
\label{eq:F2}
\end{equation}
with parameters $c_1, c_2>0$, $2 c_1c_2 \geq q^2$, and 
\be
    E_\phi = \dd \phi + \frac{c_2}{2c_1} \rho \dd t, \quad
     E_\psi = \dd \psi + \frac{c_1}{2c_2} \rho \dd t.
\ee
\end{subequations}

Additionally, we present three-parameter families of solutions for any Chern-Simons coupling. Their entropy relations are not as simple, but also qualitatively and perturbatively (to leading order in angular momentum or charge) agree with those presented in \cite{Horowitz:2024kcx}. For the supergravity coupling we again find two three-parameter families, one of which has both a vacuum and a static limit. Just like for vanishing Chern-Simons coupling, this family has the same entropy relation as the charged Myers-Perry (CCLP) horizons, even though these are different solutions. The explicit expressions for the solutions are deferred to \S\ref{sec:genlam}.

More generally, we show that the equations of motion for $U(1)^2$ symmetric extremal horizons can be reduced to three coupled second order ODEs \eqref{bd:es} together with a first order constraint \eqref{eq:cons}. We deduce that any family of solutions with spherical horizon cross-sections and fixed Chern-Simons coupling can depend on at most five continuous parameters. The three-parameter families above are obtained by explicitly solving the system of ODEs under the assumption that the co-rotating electric field\footnote{The co-rotating electric field is defined as the function multiplying the AdS$_2$ volume form $\frac 14\text{d}t \wedge \text{d}\rho$ in the expression for the near-horizon Maxwell field.} is constant. 

It turns out that there is an underlying Sasakian structure on the spherical cross-sections of our solutions, which allows us to prove a classification result characterising them. Any rotating horizon admits at least one Killing vector field tangent to (compact) cross-sections as a consequence of the rigidity theorem proven in \cite{Dunajski:2023xrd, Colling:2025dub}. By performing a conformal rescaling on the orbit space of this Killing vector, we construct a Sasakian structure on the cross-section for solutions with constant non-zero co-rotating electric field. The Sasakian structure can be used to establish the existence of a second commuting Killing vector (Proposition \ref{prop2kvf}), resulting in the following classification of these solutions.

\begin{theorem}\label{result:all}
    Any smooth five-dimensional rotating near-horizon geometry with constant non-zero co-rotating electric field and compact horizon cross-sections is given by one of the solutions presented in \textup{\S\ref{sec:EMsol}}. In particular, within pure Einstein-Maxwell theory any such solution is (up to a quotient) contained in the family \eqref{eq:gF} or in \eqref{eq:gF2}.
\end{theorem}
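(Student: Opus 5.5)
The proof will reduce the classification to the ODE analysis already carried out for $U(1)^2$-symmetric horizons, by first establishing that extra symmetry geometrically. Write the near-horizon data in Gaussian null form on a compact cross-section $H$: a Riemannian metric $\gamma$, a one-form $h$, a function $\Delta$, and the Maxwell data. The Maxwell data consist of the co-rotating electric field $\Psi$, the coefficient of $\tfrac14 \dd t\wedge\dd\rho$, and a magnetic two-form on $H$. The rotating assumption, together with the rigidity theorem of \cite{Dunajski:2023xrd, Colling:2025dub}, gives a Killing vector field $K$ of $\gamma$ that is tangent to $H$ and preserves all the data. In the rotating case one can take $h = \Gamma^{-1}K^\flat - \dd \log \Gamma$ for a positive function $\Gamma$. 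The first step is to use the Maxwell equations with $\Psi$ constant and non-zero. They relate the magnetic field on $H$ to $\dd$ of the rotation one-form, and so express it algebraically through $\dd(\Gamma^{-1}K^\flat)$. The projected Einstein equations should then become a set of equations on the four-dimensional orbit space $B = H/K$ (or on its local version near generic orbits).

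The second step is the conformal rescaling. Set $\hat\gamma = \Gamma^{-1}\gamma$, with the power chosen so that the rescaled $K$ has unit norm and $\eta = \hat K^\flat$ satisfies $\dd\eta = 2\Phi$. Here $\Phi$ is a Kähler form for a complex structure $J$ on the orbit-space metric. I will check that the Einstein–Maxwell equations with constant $\Psi$ are equivalent to $(\hat\gamma,\eta,K)$ being Sasakian on $H$. Equivalently, the transverse metric is Kähler with $\dd\eta$ proportional to the Kähler form, and the remaining equations become a scalar equation linking $\Gamma$ to the transverse scalar curvature. Compactness and smoothness then make $H$ a compact Sasakian 3-manifold.

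The third step, which I expect to be the main obstacle, is Proposition \ref{prop2kvf}: producing a second commuting Killing vector. My plan is to show that $J\nabla\Gamma$, or more likely a Hamiltonian vector field built from the transverse scalar curvature, is Killing on the transverse Kähler surface. The approach is to derive from the remaining Einstein equation a condition of extremal type, namely that the transverse scalar curvature is a Killing potential, via an argument in the style of Calabi or Lichnerowicz. This uses integration by parts over compact $H$, and the step where compactness is essential is showing that a certain nonnegative integrand vanishes. This vector field then lifts to $H$, commutes with $K$, and preserves $\Gamma$ and $F$. If the field is identically zero, $\Gamma$ is constant and the solution is homogeneous, which is covered as a special case.

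Finally, with the $U(1)^2$ symmetry and a Sasakian (toric) structure in hand, I will use adapted coordinates $(y,\phi,\psi)$. I will substitute into the reduced ODE system \eqref{bd:es} with constraint \eqref{eq:cons}, imposing constant co-rotating electric field. The Sasakian condition fixes the transverse metric to the toric Kähler form with $\Gamma$ quadratic in $y$, namely $[c_1y+c_2(1-y)]^2$. Solving the remaining algebraic relations gives exactly the branches listed in \S\ref{sec:EMsol}. For zero Chern–Simons coupling these are the two branches \eqref{eq:gF} and \eqref{eq:gF2}. Smoothness at $y=0,1$ fixes the periodicities up to a lens-space quotient, which accounts for the phrase ``up to a quotient'' in the statement.
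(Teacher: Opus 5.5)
Your overall architecture (rigidity, then a Sasakian structure, then a second commuting Killing field, then the $U(1)^2$ ODE reduction) matches the paper's. However, the Sasakian construction is wrong as stated, and the step you flag as the main obstacle is aimed at a false statement.

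\textbf{The Sasakian structure.} Constancy of $\chi$ forces $|K|^2$ to be constant: with $\chi$ constant, \eqref{eq:matfinal} becomes a quadratic equation for $|K|^2$ with constant coefficients. So when $\Gamma$ varies, no rescaling $\Gamma^{-p}\gamma$ of the whole metric can turn $K$ into a unit-norm Killing field. Two ingredients are missing from your proposal. The first is $|K|^2=\text{const}$. The second is $\text{d}K=\frac{\tau}{\Gamma}\star K$ with $\tau$ constant, which follows from the twist identity \eqref{eq:Q2}. With these in hand, one rescales only the orbit-space part, $h=\frac{\tau}{2\Gamma}(\gamma-|K|^{-2}K\otimes K)+|K|^{-2}K\otimes K$, so that $\text{d}K=2\star_h K$. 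Note that the orbit space is two-dimensional, not four, since $H$ is a 3-manifold; K\"ahler is automatic there. The construction also needs $\tau\neq 0$. For $\tau=0$ there is no contact structure: $K$ is parallel and $H$ is a quotient of $\mathbb{R}\times S^2$, giving the ring solution, a case you never treat. For constant $\Gamma$ you simply assert homogeneity. The paper obtains it by a transverse homothety making $(H,h)$ Sasaki--Einstein, hence a round $S^3$ quotient with Hopf Reeb field, which supplies the $U(2)$ symmetry.

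\textbf{The second Killing field.} Extremality of the transverse metric fails. Take the solutions of \S\ref{sec:EMsol} with $c_1\neq c_2$. The Gauss curvature of the rescaled orbit-space metric is proportional to $1-3(c_1+c_2)\Gamma^{-1/2}+6c_1c_2\Gamma^{-1}$. This is quadratic, not affine, in $\Gamma^{-1/2}$, which is the moment map of the second Killing field. Hence the Hamiltonian vector field of the transverse scalar curvature is not Killing, and no Calabi--Lichnerowicz integral identity can produce the symmetry.

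The correct field is $L=\Gamma^{-1/2}[\iota_{\nabla\Gamma}(\star K)-2\tau K]$. Its transverse part is a constant multiple of $J\nabla\Gamma^{1/2}$, not $J\nabla\Gamma$. Its Killing property is a purely local identity and needs no compactness:
\begin{itemize}
\item write $\nabla_a\nabla_b\Gamma$ via \eqref{nhe1} and \eqref{eq:Bsol} as in \eqref{gravsimp};
\item observe that the $K_aK_b$ and $\gamma_{ab}$ terms drop out of $K_c\epsilon_{(b}{}^{cd}\nabla_{a)}\nabla_d\Gamma$;
\item convert the remaining Ricci term using the Ricci identity for $K$ together with \eqref{tau}.
\end{itemize}
Compactness enters elsewhere: through the rigidity theorem, the topology of compact Sasakian 3-manifolds, and global regularity at $y=0,1$ in the ODE analysis.
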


We emphasize that this result does not a priori assume the existence of any symmetries on the horizon cross-sections. Conversely, in odd dimensions higher than five it is possible to reverse the procedure leading to a Sasakian structure whenever the conformal transformation involved is constant. In this way one can construct a two-parameter family of $(2n+3)$-dimensional charged and rotating near-horizon geometries starting from a $(2n+1)$-dimensional Sasaki-Einstein manifold. This may be viewed as a generalisation of the construction of vacuum horizons in \cite{Kunduri:2012uq}. As an example, we explicitly present the solutions corresponding to the standard Sasaki-Einstein structure on a round $S^{2n+1}$.

\paragraph{Outline} In \S \ref{sec:known} we introduce five-dimensional Einstein-Maxwell-Chern-Simons theory and write down the equations of motion governing near-horizon geometries in this theory, the \textit{near-horizon equations}, using the formalism of \cite{Dunajski:2023xrd, Colling:2025dub}. In \S \ref{sec:solve}, we study the near-horizon equations, first without symmetry assumptions and then for geometries invariant under a $U(1) \times U(1)$ action. We prove an upper bound on the number of solutions and construct our new solutions with a constant co-rotating electric field Ansatz. Readers interested only in the solutions but not the details of solving the equations may jump directly to \S \ref{sec:EMsol} without difficulty. In \S \ref{sec:EMsol}, we present the new solutions, their conserved charges, and thermodynamic relations. Special values of the Chern-Simons coupling are treated separately. In \S \ref{sec:sas}, we consider the Sasakian structure underlying these new solutions and complete the proof of Theorem \ref{result:all}. We then present the construction of higher-dimensional generalisations of these horizons starting from a Sasaki-Einstein manifold. Appendix \ref{Aknown} contains an exposition of the previously known near-horizon geometries relevant to this work. 

\paragraph{Note Added.} While we were in the final stages of preparing this manuscript, we learned that the family of solutions we call family 1 (equation \eqref{eq:gF}) in pure Einstein-Maxwell theory has been independently found by Cameron Gibson.

\section{Five-dimensional Einstein-Maxwell-Chern-Simons theory}\label{sec:known}
We consider the following action
\be\label{eq:action}
S = \int d^5 x \sqrt{-g} \left(R - F_{\mu \nu} F^{\mu \nu} \right) - \frac{8 \l}{3 \sqrt{3}} \int F \wedge F \wedge A, 
\ee
where $R$ is the Ricci scalar of the five-dimensional spacetime metric $g$ and $F = \text{d}A$ is the Maxwell field strength. The Chern-Simons coupling $\l$ is normalised such that the action becomes the bosonic part of $5D$ minimal supergravity at $\l = 1$. We take $\l$ real and non-negative throughout this work, since the theory with negative $\l$ can be mapped to positive $\l$ by simultaneously negating the Maxwell field. The equations of motion read
\begin{subequations}\label{bd:fulleqs}
\begin{align}
&R_{\mu\nu} = 2 \left( F_{\mu\rho}F_{\nu}^{\:\:\rho}-\frac 16 F_{\rho\sigma}F^{\rho\sigma}g_{\mu\nu} \right), \label{eq:eqgrav}\\[5pt]
&\text{d} \star F + \frac{2 \l}{\sqrt{3}} F \wedge F =0 . \label{eq:eqmat}
\end{align}
\end{subequations}

For stationary asymptotically flat spacetimes with $U(1)^2$ axial symmetry, the mass, charge, and angular momenta can be computed via Komar integrals \cite{Komar} as
\begin{equation}
    M = \frac{3}{32 \pi} \int_{S^3_\infty} \star \text{d} k, \quad J_i = - \frac{1}{16 \pi} \int_{S^3_\infty} \star \text{d} m_i, \quad Q = \frac{1}{8 \pi} \int_{S_\infty^3} \left( \star F + \frac{2 \l}{\sqrt{3}} A \wedge F \right), 
\end{equation}
where $k$ is the stationary Killing vector, and $m_{1,2}$ are two commuting axial Killing vectors. There exist various definitions of electric charge when the Chern-Simons coupling is non-zero. The charge $Q$ here is the Page charge \cite{Page}, which is conserved. This charge is well-defined provided we choose the gauge field $A$ to be globally defined on the three-sphere $S^3_\infty$ at spatial infinity.

If the spacetime includes a black hole region with a Killing event horizon $\mathcal{H}$, the generator of the horizon can be decomposed as
\be
\xi = k + \sum_i \Omega_i m_i. 
\ee
The $\Omega_i$ are interpreted as the angular velocities of the black hole as measured by a stationary observer. We assume that topologically $\mathcal{H} = \mathbb{R} \times H$ with $H$ a compact three-dimensional cross-section transversal to the integral curves of $\xi$. It was shown in \cite{Galloway:2005mf} (see also \cite{Lucietti:2012sa} for the extremal case) that the horizon cross-section is diffeomorphic to a three-sphere $S^3$, a ring $S^2 \times S^1$, a lens space $L(p,q)$ or a connected sum of these\footnote{Assuming $U(1)^2$ symmetry the connected sums can be ruled out \cite{topsum}.}. We focus on $S^3$ topology below. Stationary and asymptotically flat spherical black hole solutions to~\eqref{eq:action} with $U(1)^2$ axial symmetry satisfy a Smarr formula as
\be\label{eq:smarr}
\frac{2}{3} M =   \sum_i \Omega_i J_i + \frac{\k \mathcal{A}}{8\pi} + \frac{4}{3} \Phi_{\mathcal{H}} Q.
\ee
Here $\k$ is the surface gravity, $\mathcal{A}$ is the area of the horizon cross-section, and $\Phi_{\mathcal{H}}$ denotes the co-rotating electric potential $\Phi = \iota_\xi A$ evaluated at the horizon $\mathcal{H}$. This requires fixing a gauge such that $\mathcal{L}_\xi A = 0$ and $\Phi_{S^3_\infty} = 0$. We have further assumed that $H_2(\Sigma) = 0$ for a spacelike hypersurface $\Sigma$ interpolating between the asymptotic $S^3_\infty$ and a horizon cross-section $H$. The general formula including ring topology for the horizon and possible solitons outside the horizon was derived in \cite{Kunduri:2013vka}. 

Since we are interested in the geometry of the horizon, it is helpful to use conservation laws to relate Komar integrals for computing the conserved quantities at infinity to integrals over the horizon cross-section $H$. The Page charge is conserved by itself, so it can be calculated on any compact spacelike three-surface cobordant to $S_\infty^3$, including $H$, 
\be \label{eq:defcharge}
Q = \frac{1}{8\pi} \int_H \left( \star F + \frac{2 \l}{\sqrt{3}} F \wedge A \right). 
\ee
The expressions for angular momenta require some algebra, and in the end they are given by \cite{Horowitz:2024kcx,Kunduri:2013gce,Hanaki:2007mb}
\be \label{eq:defangmomt}
J_{i} = -\frac{1}{16\pi} \int_H \left(\star \text{d} m_i + 4 (m_i \cdot A) \star F + \frac{16 \l}{3\sqrt{3}}(m_i \cdot A) A \wedge F\right),
\ee
where $A$ is in a global gauge satisfying $\mathcal{L}_{m_i} A = 0$ (such a gauge may be obtained by averaging over the orbits of $m_i$ \cite{Hanaki:2007mb}), and the $\cdot$ denotes contraction of indices. The Komar mass $M$ cannot be computed from data on the horizon alone because it requires information from infinity about the stationary Killing vector.

\subsection{Extremal horizons} \label{ehorsec}

In this work we consider the case where the horizon $\mathcal{H}$ is extremal, i.e. the surface gravity vanishes. The spacetime is assumed to be smooth. Although above we assumed $\mathcal{H}$ is the event horizon of a black hole, our analysis does not rely on global properties of the spacetime and hence applies to any extremal Killing horizon in the theory (\ref{eq:action}). It is well known that in this situation one can define an associated spacetime, the near-horizon geometry \cite{Reall:2002bh,Kunduri:2013gce} , which itself solves the equations \eqref{bd:fulleqs} and may be thought of as a scaling limit of the original spacetime zooming into the horizon. In Gaussian null coordinates $(v,r,x^i)$, it reads
\begin{subequations}\label{bd:nh}
\begin{align}
    g &= 2\text{d}v\left(\text{d}r + rX_i(x)\text{d}x^i + \tfrac 12G(x)r^2\text{d}v^2\right) + \gamma_{ij}(x)\text{d}x^i\text{d}x^j, \label{gnh}\\
    F &= -\text{d}\left(\psi(x)r\text{d}v\right) + \tfrac 12B_{ij}(x)\text{d}x^i \wedge \text{d}x^j, \label{fnh}
\end{align}  
\end{subequations}
which is globally defined on $\R^2_{v,r} \times H$. The horizon is at $r = 0$ with generator $\xi = \partial_v$ and the $x^i$ are local coordinates on the horizon cross-section $H \cong S^3$. The near-horizon geometry is determined by data $(\gamma, X,\psi,B)$ on $H$, where $\gamma$ is the induced (Riemannian) metric, $X = X_i(x)\text{d}x^i$ is a 1-form, $\psi$ a function and $B = \frac 12B_{ij}(x)\text{d}x^i\wedge \text{d}x^j$ a closed 2-form. The equations \eqref{bd:fulleqs} imply the following near-horizon equations on $H$,
\begin{subequations}\label{bd:he}
\begin{align}
    R_{ab} = \frac 12 X_aX_b - \nabla_{(a}X_{b)} + 2B_{ac}B_b^{\:\:c} + \frac 13\gamma_{ab}(2\psi^2 - B_{cd}B^{cd}) &, \label{he1}\\
    (\nabla^a - X^a)B_{ab} = -(\nabla_b - X_b)\psi  + \frac{4\lambda}{\sqrt{3}}\psi(\star B)_b&. \label{he2}
\end{align}
\end{subequations}
Here $R_{ab}$ is the Ricci tensor of the Levi-Civita connection $\nabla$ of the metric $\gamma$ and $\gamma$ is used to raise indices and define the Hodge star operator $\star$. Conversely, given a solution to \eqref{bd:he} we can reconstruct the function $G$ and hence the full near-horizon geometry using 
\begin{equation}
    G = \frac 12\vert X \vert^2 - \frac 12\nabla_a X^a - \frac 43\psi^2 - \frac 13\vert B \vert^2,
\end{equation}
where $\vert \cdot \vert$ denotes the $\gamma$-norm. 

For any solution to \eqref{bd:he} there is a smooth function $\Gamma > 0$ (unique up to constant rescaling) on $H$ such that 
\begin{equation} \label{Kpref}
    K^a = \Gamma X^a + \nabla^a\Gamma
\end{equation}
either vanishes identically or is a Killing vector of $\gamma$ preserving $(X,\psi,B)$. This was first shown in the vacuum case by Dunajski and Lucietti \cite{Dunajski:2023xrd} and later extended to four-dimensional Einstein-Maxwell theory \cite{Colling:2024txz} and to general matter theories including (\ref{eq:action}) \cite{Colling:2025dub}. In terms of $K$ and $\Gamma$, the near-horizon equations \eqref{bd:he} reduce to
\begin{subequations}\label{bd:nhe}
\begin{align}
    &R_{ab} = \frac{1}{2\Gamma^2}K_aK_b - \frac{(\nabla_a\Gamma)(\nabla_b\Gamma)}{2\Gamma^2} +\frac{1}{\Gamma}\nabla_a\nabla_b\Gamma+2B_{ac}B_b^{\:\:c} + \frac 13 \g_{ab}(2\psi^2-B_{cd}B^{cd}),\label{nhe1} \\
    &\nabla^a(\Gamma B_{ab}) = K_b\psi + \frac{4\l}{\sqrt{3}} \Gamma \psi (\star B)_b, \label{nhe2}\\
    &K^aB_{ab} = \nabla_b(\psi\Gamma). \label{nhe3}
\end{align}
\end{subequations}
This is the formulation of the equations of motion we will study in this work. They imply that the near-horizon geometry \eqref{bd:nh} can be written in terms of the coordinate $\rho = r\Gamma^{-1}$ as 
\begin{subequations}\label{bd:nh2}
\begin{align}
    g &= \Gamma[2\text{d}v\text{d}\rho - \alpha \rho^2\text{d}v^2] + \g_{ij}(\text{d}x^i + K^i\rho\text{d}v)(\text{d}x^j + K^j\rho\text{d}v), \label{gnh2} \\
    F &= \Gamma\psi \:\text{d}v \wedge \text{d}\rho +\tfrac 12 B_{ij}(\text{d}x^i + K^i\rho\text{d}v)\wedge(\text{d}x^j + K^j\rho\text{d}v). \label{fnh2}
\end{align}
\end{subequations}
Here $\alpha$ is defined by
\begin{equation} \label{alpha}
    \alpha = \frac{\vert K \vert^2}{2\Gamma} - \frac 12\Delta \Gamma + \frac 43 \Gamma \psi^2 + \frac 13\Gamma \vert B \vert^2 = \frac{\vert K \vert^2}{\Gamma} - \Gamma G,
\end{equation}
which can be proven to be constant by the equations of motion. An integration over $H$ shows that $\alpha > 0$ unless $K,B,\psi$ all vanish, so that the metric in the square brackets in \eqref{gnh2} is locally isometric\footnote{The coordinate $v$ is related to $t$ in \eqref{eq:g} (where we fixed $\alpha = 4$) via $\alpha v = t - \rho^{-1}$.} to AdS$_2$. The (maximally extended) near-horizon geometry is invariant under a $SO(2,1)$ action corresponding to the isometries of this AdS$_2$ factor \cite{Kunduri:2007vf}. Note that we did not assume any symmetries in the original spacetime other than the horizon generator $\xi$.

We call a solution rotating if $K$ is non-zero (see \cite{Colling:2025dub}). A solution is static if $\xi = \partial_v$ is hypersurface-orthogonal in the near-horizon geometry. 
We refer to the function $\chi = \Gamma\psi$ multiplying the volume form $\text{d}v\wedge\text{d}\rho$ of the AdS$_2$ factor in the expression (\ref{fnh2}) for $F$ as the co-rotating electric field and to the induced Maxwell field $B$ on $H$ as the magnetic field.

\subsection{Extremal Smarr formula}\label{sec:extsmarr}
Within the near-horizon geometry, one could take \eqref{eq:defcharge} and \eqref{eq:defangmomt} as definitions of charge and angular momenta in the presence of rotational Killing fields without worrying about their connections to Komar integrals at asymptotic infinity\footnote{If the near-horizon geometry does not arise from an asymptotically flat extremal black hole spacetime, the conserved charges computed on the horizon as in (\ref{eq:defcharge}, \ref{eq:defangmomt}) may be different from the asymptotic Komar charges. This would be the case, for example, for horizons of asymptotically Kaluza-Klein black holes.}. See \cite{Hajian:2013lna,Donnay:2016ejv} for alternative derivations of these formulae from near-horizon symmetries. In \cite{Hajian:2013lna} it is shown that these quantities satisfy the following relation within the near-horizon geometry 
\be\label{eq:Smarr}
\sum_i \omega_i J_i + \frac{4}{3} \, \mu \, Q  + \frac{\alpha}{8\pi} \mathcal{A} = 0,  
\ee
where $\mathcal{A}$ is the area of the horizon cross-section. Here $\omega_i$ are constants such that the preferred Killing vector (\ref{Kpref}) in the near-horizon geometry is $K = \sum_i\omega_i m_i$. The constant $\mu$ is defined as
\be \label{mu}
\mu \equiv  \Gamma  \psi + K \cdot C, 
\ee 
where $C$ is the induced gauge potential on $H$ satisfying $B = \text{d}C$. We require it to be in a gauge such that $\mathcal{L}_K C = 0$ and it is globally well-defined on $H$. The expression for $\mu$ is constant due to the matter equation  \eqref{nhe3}. Note that $\omega_i, \mu$ and $\alpha$ are all defined up to simultaneous rescaling by a positive constant, which can be traced back to the scaling freedom in the definition of $\Gamma$.

We will present our own self-contained derivation of (\ref{eq:Smarr}) using the formalism introduced in \S\ref{ehorsec}. It involves writing the charge and angular momenta in near-horizon variables. Inserting \eqref{bd:nh} into the expressions (\ref{eq:defcharge}, \ref{eq:defangmomt}) yields
\begin{gather}
    Q = - \frac{1}{8\pi} \int \text{d}^3x \sqrt{\gamma} \, \left(\psi - \frac{2\l}{\sqrt{3}} C \cdot (\star B)\right), \label{Qhor} \\[5pt]
    J_i = - \frac{1}{16\pi} \int \text{d}^3 x \sqrt{\gamma} \left( X_i - 4 C_i \, \psi + \frac{16 \l}{3 \sqrt{3}} C_i \, C \cdot (\star B)\right). \label{Jhor}
\end{gather}
where $C_i = m_i \cdot C$ and similarly for $X_i$. Since $\mathcal{L}_{m_i}\Gamma = 0$ we have $X_i = \Gamma^{-1}K_i$ from (\ref{Kpref}), so that
\begin{gather}
    \mu Q = -\frac{1}{8 \pi} \int \text{d}^3 x \sqrt{\gamma} \left(\Gamma \psi^2 + (K \cdot C) \psi - \frac{2 \l}{\sqrt{3}} (\Gamma \psi  + K \cdot C)[C \cdot (\star B)]\right), \\[5pt]
    \sum_{i} \omega_i J_i = - \frac{1}{16 \pi} \int \text{d}^3 x \sqrt{\gamma} \left( \frac{|K|^2}{\Gamma} - 4 ( K \cdot C) \psi + \frac{16\l}{3 \sqrt{3}} (K \cdot C)[C \cdot (\star B)] \right).
\end{gather}
Next we use the matter equation of motion \eqref{nhe2} to substitute
\be
( K \cdot C) \psi \:\Hat{=} - \frac{1}{2} \Gamma |B|^2 - \frac{4 \l}{\sqrt{3}} \Gamma \psi [ C \cdot (\star B)], 
\ee
which holds up to a divergence $\nabla^a(\Gamma A^bB_{ab})$ that drops out after integration. Then the first two terms in \eqref{eq:Smarr} are
\be
\begin{aligned}
\sum_i \omega_i J_i + \frac{4}{3} \mu Q &= - \frac{1}{8\pi} \int \text{d}^3 x \sqrt{\gamma} \left[\frac{|K|^2}{2\Gamma} + \frac{4}{3} \Gamma \psi^2 + \frac{1}{3} \Gamma |B|^2 \right]\\
& = -\frac{\alpha}{8\pi} \mathcal{A}. 
\end{aligned}
\ee
In going to the second line, we used \eqref{alpha} and the fact that the horizon is compact to ignore the integral of $\Delta\Gamma$. Terms proportional to $\l$ cancel out and we arrive at \eqref{eq:Smarr}.

If the near-horizon geometry arises from an extremal black hole spacetime which may be viewed as a limiting member of a subextremal family of solutions, one can think of \eqref{eq:Smarr} as a ``zero temperature limit'' of the Smarr formula \eqref{eq:smarr} \cite{Hajian:2013lna}. To see this, let $\Omega_i, \Phi_{\mathcal{H}}$ and $\kappa$ denote the angular velocities, co-rotating electric potential and surface gravity of the subextremal solutions, with extremal values $\Omega_i^{ext.}, \Phi_{\mathcal{H}}^{ext.}$, and $\kappa^{ext.} = 0$. Subtracting the Smarr relation \eqref{eq:smarr} for the extremal spacetime from the subextremal version and dividing by $\kappa$, in the limit $\kappa \to 0$ we recover \eqref{eq:Smarr} upon setting
\begin{equation}
    \frac{\omega_i}{\alpha} = \lim_{\kappa \to 0}\frac{\Omega_i - \Omega_i^{ext.}}{\kappa} \hspace{.8cm}\text{ and }\hspace{.8cm} \frac{\mu}{\alpha} = \lim_{\kappa \to 0} \frac{\Phi_\mathcal{H} - \Phi_{\mathcal{H}}^{ext.}}{\kappa}.
\end{equation}
This is consistent with the interpretation of $\alpha$ in \cite{Colling:2025dub} as an extremal analogue of the surface gravity. The mass term in \eqref{eq:smarr} does not contribute to the limit since $M - M^{ext.} = O(\kappa^2)$, as argued in \cite{Johnstone:2013ioa}. 

\section{Near-horizon equations}\label{sec:solve}

In this section we solve the near-horizon equations introduced in \S\ref{ehorsec}. Without any symmetry assumptions, in \S \ref{sec:solvematter} we find an expression for the magnetic field in terms of the remaining horizon data and reduce the matter equations of motion to a single equation for the electric field. In order to make further progress we then restrict to solutions with $U(1) \times U(1)$ symmetry. After introducing the general Ansatz in \S\ref{sec:u1ans}, we show in \S \ref{sec:U1} that the equations of motion reduce to three coupled second order ODEs and a single first order constraint for the functions $\vert K \vert^2, \Gamma$ and $\chi$. We deduce that any family of solutions with fixed $\lambda$ that is globally defined on $S^3$ can depend on at most five continuous parameters. Finally, in \S \ref{sec:guess}, we assume that the electric field $\chi$ is a non-zero constant and completely solve the remaining near-horizon equations. 

\subsection{Solving the matter equations}\label{sec:solvematter}
The matter equations (\ref{nhe2}, \ref{nhe3}) for a rotating horizon can be reduced to a single scalar equation for the co-rotating electric field $\chi$ without any symmetry assumptions. Wherever $K$ is non-zero, this equation reads
\begin{multline}\label{eq:matfinal}
        \vert K \vert^2 \nabla^a\left(\frac{\Gamma}{\vert K \vert^2}\nabla_a\chi\right) \\
        + \frac{1}{3\Gamma\vert K \vert^2}\left[3\vert K \vert^4\chi + \left(c\sqrt{3}+2\lambda \chi^2\right)\left(\b \sqrt{3} + 4\lambda \vert K \vert^2\chi + 4c\chi\sqrt{3} + \tfrac{8}{3}\lambda \chi^3\right)\right]= 0.
\end{multline}
Here $\beta$ and $c$ are constants defined below. The magnetic field $B$ is determined algebraically by
\begin{equation}\label{eq:Bsol}
    B = \frac{K}{\vert K \vert^2} \wedge \text{d}\chi + \frac{1}{\Gamma \vert K \vert^2}\left(c + \frac{2\lambda}{\sqrt{3}}\chi^2\right)(\star K).
\end{equation}
Here we use $K$ to denote both the Killing vector and the 1-form $\gamma$-dual to it. Observe that $B$ is closed as a consequence of $\mathcal{L}_K B = 0$. Equation~\eqref{eq:Bsol} follows from the relations
\begin{equation}\label{eq:hookKB}
    \iota_K B = \text{d}\chi, \hspace{.8cm} \iota_K \star B = \frac{1}{\Gamma}\left(c + 
    \frac{2\lambda}{\sqrt{3}}\chi^2\right).
\end{equation}
The first equation is just (\ref{nhe3}), whereas the second one follows from (\ref{nhe2}, \ref{nhe3}) and invariance under the Killing vector $K$:
\begin{align*}
    0 = \mathcal{L}_K (\Gamma\star B) &= \text{d}\iota_K(\Gamma \star B) + \iota_K\text{d}(\Gamma \star B)\\
    &= \text{d}\iota_K(\Gamma \star B) - \iota_K\left(\chi \star K  + \frac{4\lambda}{\sqrt{3}}\chi B\right)\\
    &= \text{d}\left(\Gamma \iota_K \star B - \frac{2\lambda}{\sqrt{3}}\chi^2\right).
\end{align*}
This establishes the existence of a constant $c$ as in \eqref{eq:hookKB}. There is one remaining matter equation for $\chi$, which may be obtained by inserting \eqref{eq:Bsol} into \eqref{nhe2}. The result is
\begin{equation} \label{chi}
    \vert K \vert^2 \nabla^a\left(\frac{\Gamma}{\vert K \vert^2}\nabla_a\chi\right) + \frac{1}{\Gamma}\chi\left(\vert K \vert^2 + \frac{4\lambda c}{\sqrt{3}} + \frac{8\lambda^2}{3}\chi^2\right) + \frac{c\sqrt{3} + 2\lambda \chi^2}{\vert K \vert^2\sqrt{3}}\star(K \wedge \text{d}K) = 0.
\end{equation}
Let us write $Y = \star(K \wedge \text{d}K)$ for the twist of $K$. We make use of the general identity
\begin{equation*}
    \text{d}Y = 2\star (K \wedge \iota_K\text{Ric})
\end{equation*}
together with the gravity equation \eqref{nhe1} to obtain an expression for $Y$. Note that only the terms involving the Hessian $\nabla_a\nabla_b \Gamma$ and $B_{ac}B_b^{\:\:c}$ in the Ricci tensor will contribute. Explicitly, using index-free notation,
\begin{align*}
    \frac 12\text{d}Y &= \frac{1}{\Gamma}\star(K \wedge \iota_K \text{Hess }\Gamma) - 2\star(K \wedge \iota_{\iota_K B}B) \\
    &= \frac{1}{2\Gamma}\star (K \wedge \iota_{\nabla \Gamma}\text{d}K) - 2\star (K \wedge \iota_{\nabla \chi}B) \\
    &= -\frac{1}{2\Gamma}\star \iota_{\nabla \Gamma}(K \wedge \text{d}K) - \frac{2}{\Gamma \vert K \vert^2}\left(c + \frac{2\lambda}{\sqrt{3}}\chi^2\right)\star(K \wedge \iota_{\nabla \chi}\star K) \\
    &= -\frac{1}{2\Gamma}Y\text{d}\Gamma + \frac{2}{\Gamma}\left(c + \frac{2\lambda}{\sqrt{3}}\chi^2\right)\text{d}\chi.
\end{align*}
It follows that there is a constant $\b$ such that 
\begin{equation}\label{eq:Q2}
    \Gamma \star(K \wedge \text{d}K) = \b + 4c\chi + \frac{8}{3\sqrt{3}}\lambda \chi^3.
\end{equation}
Substituting this result into \eqref{chi}, we obtain \eqref{eq:matfinal}.

\subsection{Horizons with \texorpdfstring{$U(1)\times U(1)$}{U(1) x U(1)} symmetry} \label{sec:u1ans}
All known rotating solutions to the near-horizon equations \eqref{bd:nhe}, including the ones presented in this work, are invariant under an isometric $U(1) \times U(1)$ action. Below we introduce the most general cohomogeneity-one Ansatz for the near-horizon geometry compatible with these rotational  symmetries following  \cite{KLvac, Hollands:2010bf}. Previously known solutions relevant to this work are presented in this Ansatz in Appendix \ref{Aknown}.

Let $m_1,m_2$ denote the Killing vectors generating the $U(1) \times U(1)$ action. We introduce coordinates $\phi_1,\phi_2 \in [0,2\pi)$ along the orbits of $m_1,m_2$ such that $m_i = \partial_{\phi_i}$, and a coordinate $y$ on the orbit space of $m_i$ by $\text{d}y \propto \iota_{m_1}\iota_{m_2}\epsilon$, where $\epsilon$ is the volume form of $\gamma$ (note that $\iota_{m_1}\iota_{m_2}\epsilon$  is closed). Then $y$ takes values in a closed interval whose endpoints correspond to fixed points of the $m_i$. We fix the proportionality constant such that $y \in [0,1]$ and $(y,\phi_1,\phi_2)$ is a positively oriented coordinate chart. If $H \cong S^3$ different Killing fields vanish at the endpoints of $y$, whereas for a ring $S^2 \times S^1$ the same Killing vector vanishes at both endpoints. The induced metric $\gamma$ in the coordinates $(y,\phi_1,\phi_2)$ reads
\begin{equation}
    \gamma = \frac{a^2}{\text{det }f}\text{d}y^2 + f_{ij}(y)\text{d}\phi^i\text{d}\phi^j
\end{equation}
for some constant $a$. It is well known (see e.g. \cite{Hollands:2010bf, Kunduri:2011zr}) that the determinant of the Gram matrix $f_{ij}$ satisfies 
\begin{equation}
    \frac{\text{d}^2}{\text{d}y^2}(\Gamma\text{det }f) = -2\alpha a^2.
\end{equation}
In the present context this follows from contracting the $\phi_i\phi_j$-components of the gravity equation~(\ref{nhe1}) with the inverse matrix $f^{ij}$ and using the expression (\ref{alpha}) for $\alpha$. Since $\text{det } f$ must vanish at the endpoints of $y$, regularity requires 
\begin{equation}
    \text{det }f = \frac{\alpha a^2y(1-y)}{\Gamma}.
\end{equation}
We further find it convenient to normalise $\Gamma$ such that $\alpha = 4$. The full near-horizon geometry \eqref{bd:nh2} becomes
\begin{subequations}\label{bd:glob}
\begin{align}
    g &= \Gamma(y)[2\text{d}v\text{d}\rho - 4\rho^2\text{d}v^2] + \frac{\Gamma(y)}{4y(1-y)}\text{d}y^2 + f_{ij}(y)(\text{d}\phi^i + \omega^i\rho\text{d}v)(\text{d}\phi^j + \omega^j\rho\text{d}v), \label{gglob}\\
    F &= \chi(y)\text{d}v \wedge \text{d}\rho + B_i(y)\text{d}y \wedge (\text{d}\phi^i + \omega^i\rho\text{d}v). \label{fglob}
\end{align}
\end{subequations}
Smoothness of \eqref{bd:glob} requires  $\Gamma, f_{ij}, \chi$ and $B_i$ to extend to smooth functions of $y$ on some open interval containing $[0,1]$. To see this, note that a function $h(s)$ of the proper distance 
\begin{equation*}
s = \left\vert\int_{y_0}^y \sqrt{\gamma_{yy}} \text{ d}y\right\vert
\end{equation*}
from $y_0 \in \{0,1\}$ is smooth at $s = 0$ if and only if it can be extended as a smooth function $\tilde{h}(t)$ of $t = s^2$ to an interval containing zero. Since $\gamma_{yy}$ has simple poles in $y$ at the endpoints, $t(y)$ is smooth and d$t/\text{d}y$ approaches a non-zero constant as $y \to y_0$. Hence we can extend $h$ as a smooth function $\tilde{h}$ of $y$ across $y = y_0$.

Finally, to ensure regularity it remains to remove potential conical singularities at $y= 0,1$. For the case where $H$ has spherical topology we may assume without loss of generality that $m_1$ vanishes at $y = 1$ and $m_2$ vanishes at $y = 0$. We then impose
\begin{subequations} \label{bd:cond}
\begin{align}
f_{11}(y\to 1) &= \Gamma(1)(1-y) + \mathcal{O}\left((1-y)^2\right), \label{conda}\\
f_{22}(y\to 0) &= \Gamma(0)y + \mathcal{O}\left(y^2\right), \label{condb}\\
f_{12}(y\to0) &= \mathcal{O}(y), \quad f_{12}(y\to1) = \mathcal{O}(1-y).
\end{align}
\end{subequations}
If instead $H$ has topology $S^2 \times S^1$ with $m_1$ vanishing at both endpoints, the conditions \eqref{bd:cond} should both be imposed on $f_{11}$ and there is no requirement for $f_{22}$.

\subsection{Equations for \texorpdfstring{$\vert K \vert^2, \Gamma, \chi$}{K2, Gamma, chi}}\label{sec:U1}

Before imposing the equations of motion, it is useful to introduce linear combinations $\varphi_1,\varphi_2$ of $\phi_1,\phi_2$ such that $K = \partial_{\varphi_1}$ and det $\gamma = 1$ in the coordinates $(y,\varphi_1,\varphi_2)$. Note that $\varphi_{1,2}$ need not be $2\pi$-periodic and $K$ may not have closed orbits. The horizon data is then locally of the form

\begin{subequations}\label{eq:3Dans}
    \begin{gather}
    \gamma = \frac{ \Gamma(y) \dd y^2}{4 y (1-y)} + Q_1(y) \left( \dd \varphi_1 + Q_2(y) \dd \varphi_2 \right)^2 + \frac{4 y (1-y)}{\Gamma(y) Q_1(y)} \dd \varphi_2^2, \\
    B = \dd C, \quad C = - \chi(y) \, \dd \varphi_1 + Q_3(y) \, \dd \varphi_2,
    \end{gather}
\end{subequations}
with unknown functions $Q_1(y)$, $Q_2(y)$, $Q_3(y)$, $\Gamma(y)$, $\chi(y)$ to be solved for, where $Q_1 = \vert K \vert^2$. This Ansatz automatically satisfies (\ref{nhe3}). For ease of presentation, we suppress the $y$ dependence of the functions in the equations below. 

The equations (\ref{eq:hookKB}, \ref{eq:Q2}) allow us to determine $Q_2$ and $Q_3$ from the other three functions:
\begin{subequations}\label{eq:solveQ2Q5}
    \begin{gather}
Q_2' = -\frac{ \beta  + 4c \chi}{Q_1^2 \Gamma} - \frac{8 \l \sqrt{3} \chi^3}{9 Q_1^2 \Gamma}, \\
Q_3' = - Q_2 \chi' - \frac{ c}{Q_1 \Gamma} - \frac{2\l \sqrt{3} \chi^2}{3 Q_1 \Gamma}.
\end{gather}
\end{subequations}
Here a prime denotes a $y$-derivative. Plugging these expressions into the remaining equations (i.e. the only unsolved matter equation \eqref{eq:matfinal} and the gravity equations \eqref{nhe1}), we find that all terms involving $Q_2$ or $Q_3$ can be eliminated. This can be traced down to the fact that only the combinations $Q_2'$ and $Q_3' + Q_2 \chi'$ are invariant under the remaining gauge and coordinate freedom. As a result, any integration constant arising from solving \eqref{eq:solveQ2Q5} for $Q_2$ and $Q_3$ can be absorbed into gauge and coordinate transformations. It follows that any solution is uniquely determined by the functions $Q_1,\Gamma,\chi$ together with the constants $\beta$ and $c$.

The functions $\Gamma$ and $\chi$ are constrained by the second order equations \eqref{alpha} and \eqref{eq:matfinal} respectively. One could similarly derive an equation\footnote{This calculation is based on the identity $\frac 12\Delta \vert K \vert^2  = \vert \nabla K \vert^2 - \text{Ric}(K,K)$ satisfied by any Killing vector. The term $\vert \nabla K\vert^2$ can then be computed using \eqref{eq:Q2} and the term $\text{Ric}(K,K)$ using \eqref{nhe1} and \eqref{eq:Bsol}.} for $\Delta \vert K \vert^2$ by contracting \eqref{nhe1} twice with $K$. However, there exists a simpler equation involving $Q_1''$ which can be derived as in the vacuum case (see \cite{KLvac}) by combining the $yy$ and $\varphi_1\varphi_1$ components of the gravity equations \eqref{nhe1} with \eqref{alpha}. This leads to the following system of equations
\begin{subequations}\label{bd:es}
\be\label{eq:e1}
    \frac{\text{d}}{\text{d}y}\left(\Gamma Q_1'\right) + Q_1 \left(2 \Gamma'' - \frac{\Gamma'^2}{\Gamma}\right) + 4 \, \Gamma \chi'^2 = 0, 
\ee
\begin{equation}\label{eq:e2}
    \Gamma\frac{\text{d}}{\text{d}y}\left(\frac{P\Gamma'}{\Gamma}\right) - Q_1 + 8\Gamma - \frac 83\chi^2 - \frac{4(\sqrt{3}c + 2\lambda\chi^2)^2 + 12 P\Gamma\chi'^2}{9Q_1} = 0,
\end{equation}\vspace{.1cm}
\begin{equation}\label{eq:e3}
    Q_1\frac{\text{d}}{\text{d}y}\left(\frac{P\chi'}{Q_1}\right) + \frac{1}{3Q_1\Gamma}\left[3Q_1^2\chi + \left(c\sqrt{3} + 2\lambda\chi^2\right)\left(\beta\sqrt{3}+4\lambda Q_1\chi + 4c\chi\sqrt{3} + \tfrac 83\lambda \chi^3\right)\right] = 0.
\end{equation}
\end{subequations}
Here $P = P(y)  = \a y (1-y) = 4 y (1-y)$. Note that $P$, $\chi$, $Q_1$, and $\Gamma$ are functions of $y$, whereas $\l$, $c$, and $\b$ are constants. In addition, the functions satisfy a first order constraint 
\begin{align} \label{eq:cons}
Q_1^3 + \Gamma PQ_1'^2& - 16Q_1^2\Gamma - 2Q_1^2P'\Gamma' + \frac{P\Gamma'^2Q_1^2}{\Gamma} + Q_1Q_1'(P\Gamma' - P'\Gamma) \nonumber \\
&+ 4Q_1\left(c + \frac{2\lambda}{\sqrt{3}}\chi^2\right)^2 + \left(\beta + 4c\chi + \frac{8}{3\sqrt{3}}\lambda \chi^3\right)^2 + 4Q_1(P\Gamma \chi'^2 + \chi^2 Q_1) = 0.
\end{align}
This condition comes from solving for second derivatives of $Q_1, \Gamma, \chi$ from \eqref{bd:es} and substituting this back into the equations of motion. All matter equations are solved, as expected, and all gravity equations degenerate to this constraint. The derivative of the left-hand-side of \eqref{eq:cons} vanishes provided that \eqref{bd:es} are satisfied. Hence, to construct the general local rotating solution to the equations of motion with $U(1) \times U(1)$ symmetry, it suffices to solve \eqref{bd:es} and impose \eqref{eq:cons} at a single point.

Let us compare this reduced system of equations to the corresponding system in the vacuum case. For vacuum solutions $\chi$ and $c$ vanish, so that equation \eqref{eq:e3} is automatically satisfied. The constraint \eqref{eq:cons} may be viewed as defining $\beta$, which no longer appears in any of the second order equations. The remaining equations (\ref{eq:e1}, \ref{eq:e2}) for the two functions $\Gamma$ and $Q_1$ were solved in full generality in \cite{KLvac}.

For the special value of $\lambda = 1$, the equations can be recast as the equations of motion for a non-linear sigma model defined on the orbit space $H/U(1)^2$ with homogeneous target space $G_{2(+2)}/SO(4)$ \cite{G2}. This formulation was used in \cite{Kunduri:2011zr} to prove that $\Gamma$ must be a quadratic polynomial in $y$, and moreover all components of $\gamma$ and $F$ are rational functions of $y$. This is in fact also true for known solutions for other values of $\lambda$ (including the ones presented in this work). It should be noted however that the equations remain hard to solve in full generality even with a quadratic Ansatz for $\Gamma$. 

\subsection{Maximum number of solutions} \label{secmaxsol}

Although we are not able to solve the near-horizon equations in full generality, we are able to place an upper bound on the number of free continuous parameters. We show that any smooth family of solutions to \eqref{bd:es}, \eqref{eq:cons}, consisting of the functions $(\vert K \vert^2, \Gamma, \chi)$ and constants $(c,\beta)$, can be parametrised by at most five parameters\footnote{More precisely, we prove the following: suppose $U \subset \R^k$  is open and $\Phi = (\vert K \vert^2, \Gamma, \chi, c, \beta): [0,1] \times U \to \R^5$ is a continuous map such that the map $u \in U \mapsto \Phi(\cdot, u)$ is injective and assigns to each set of parameters $u$ a smooth solution to the system \eqref{bd:es}, \eqref{eq:cons} with constant $c$ and $\beta$. Then $k$ is at most five. By a smooth function from the closed interval $[0,1]$ we mean the restriction of a smooth map defined on an open interval containing $[0,1]$.}. We have already seen that such data uniquely determines the (local) solution to the near-horizon equations assuming $U(1) \times U(1)$ symmetry.

A solution to the second order system \eqref{bd:es} is determined by six boundary conditions. Only five of these are free after imposing the first order constraint \eqref{eq:cons}. Note that equations \eqref{eq:e2} and \eqref{eq:e3} have singular points at $y = 0$ and $y = 1$, so further constraints arise from requiring smoothness of the solution across these points. We will only consider regularity at $y = 0$. This does not in general guarantee regularity at $y = 1$, but it will be sufficient to deduce the upper bound on the number of solutions. Evaluating (\ref{eq:e2}, \ref{eq:e3}) at $y = 0$ yields two first order conditions 
\begin{equation} \label{cons2}
    \begin{aligned}
    4\Gamma' - Q_1 + 8\Gamma - \frac 83\chi^2  - \frac{4(\sqrt{3}c + 2\lambda\chi^2)^2}{9Q_1} &= 0, \\
    4\chi' +\frac{1}{3Q_1\Gamma}\left[3Q_1^2\chi + \left(c\sqrt{3}+2\lambda\chi^2\right)\left(\beta\sqrt{3}+4\lambda Q_1\chi + 4c\chi\sqrt{3} + \tfrac 83\lambda \chi^3\right)\right] &= 0,
\end{aligned}
\end{equation}
where all functions are evaluated at $y = 0$. It is straightforward to verify that these constraints are independent of \eqref{eq:cons}.  Note that we are assuming $Q_1(0) \neq 0$ above; if instead $K$ vanishes at the endpoint $y = 0$ we may replace the constraints by
\begin{equation} \label{cons3}
    \begin{aligned}
Q_1 &= 0, \hspace{.2cm} &
    c + \frac{2\lambda}{\sqrt{3}}\chi^2 &= 0, \\
    \beta + 4c\chi + \frac{8}{3\sqrt{3}}\lambda \chi^3 &= 0, & \hspace{.2cm} 
    4\Gamma' + 8\Gamma - \frac 83\chi^2 -\frac{16\Gamma(\chi')^2}{3Q_1'} &= 0, 
\end{aligned}
\end{equation}
again evaluated at $y = 0$. These equations are obtained from \eqref{eq:hookKB}, \eqref{eq:Q2} and \eqref{eq:e3}. We will show that a solution to \eqref{bd:es} is uniquely determined given generic values of $c$, $\beta$, $Q_1(0)$, $\Gamma(0)$, $\chi(0)$, $Q_1'(0)$, $\Gamma'(0)$, $\chi'(0)$ subject to the constraints (\ref{eq:cons}, \ref{cons2}) or \eqref{cons3}, resulting in a maximum of five parameters. The precise meaning of ``generic'' is explained below.

Let us rewrite the system \eqref{bd:es} in the first order form 
\begin{equation} \label{fosystem}
    y \mathbf{f}'(y) = \textbf{F}(y, \mathbf{f}(y)),
\end{equation}
where $\mathbf{f} = (Q_1,Q_1',\Gamma,\Gamma',\chi,\chi')$ satisfies the constraints at $y = 0$ and $\textbf{F}$ is a vector that is regular in a neighbourhood of $(0,\mathbf{f}(0))$. If the $6\times 6$ Jacobian matrix $M^i_{\:j} = \partial_{\mathbf{f}^j}\textbf{F}^i$  has no positive integer eigenvalues at $(0,\mathbf{f}(0))$ (this is generically true for our system), all derivatives of $\mathbf{f}$ at $y = 0$ can be constructed from $\mathbf{f}(0)$ by repeatedly differentiating \eqref{fosystem}. The formal power series at $y=0$ then uniquely determines the solution to \eqref{fosystem}, see \cite[Theorem 4.7]{odes} and references therein.

Non-generic cases refer to choices of $c, \b$, and $\mathbf{f}(0)$ such that $M^i_{\:j}$ has $m$ positive integer eigenvalues ($ m \neq 0$), counted with geometric multiplicity. In those cases, $m$ additional parameters are needed to specify all derivatives of $\textbf{f}$ at $y = 0$. This is because the $n$-th derivative of \eqref{fosystem} determines $\textbf{f}^{(n)}(0)$ in terms of lower order derivatives of $\textbf{f}$ up to a choice of $v^{(n)} \in \text{ker}(n I - M)$. However, we have checked that these extra parameters are compensated by $m$ independent relations between $c, \b$, and $\mathbf{f}(0)$ that arise from requiring $M^i_{\:j}$ to have $m$ positive integer eigenvalues. More precisely, solving the equations (\ref{eq:cons}, \ref{cons2}) for $Q_1'(0), \beta, \chi'(0)$ and inserting these expressions into $M^i_{\:j}$, we found that the kernel of $M^i_{\:j}$ has rank 4 (the same is true if $Q_1(0) = 0$). By inspecting the remaining quadratic factor in the characteristic polynomial of $M^i_{\:j}$, we verified that indeed requiring either root to be a positive integer always imposes an extra independent constraint. Therefore in all cases the total number of parameters is still bounded by five.

Turning our attention to the global solution on $H$, the only thing that is not fixed by the solution to the system \eqref{bd:es} is the $GL(2,\R)$ transformation relating $\varphi_1,\varphi_2$ to the $2\pi$-periodic angles $\phi_1,\phi_2$. Any regular solution must be such that at each endpoint $y = 0$ or $y = 1$, either $K = \partial_{\varphi_1}$ or $Q_2\partial_{\varphi_1} - \partial_{\varphi_2}$ vanishes. In the spherical case the Killing vectors vanishing at the two endpoints are different and correspond to $\partial_{\phi_1}$ and $\partial_{\phi_2}$ up to multiplicative constants, which are fixed by requiring the absence of conical singularities as in \eqref{bd:cond}. Hence the transformation is completely determined, and we deduce that \textit{any family of rotating spherical $U(1) \times U(1)$ symmetric near-horizon geometries in five-dimensional Einstein-Maxwell-Chern-Simons theory can be parametrised by at most five parameters.} This bound could be attained for $\lambda = 1$ by the horizons of the Kaluza-Klein black holes in \cite{six} which, before imposing extremality, carry six independent charges. It is however still non-trivial to verify whether five of these charges remain independent in the near-horizon limit, which we have not carried out.

For a ring $S^2 \times S^1$ the situation is different: the same Killing vector vanishes at $y = 0$ and $y = 1$, so the removal of conical singularities imposes a constraint relating data at $y = 0$ to data at $y = 1$. There is no condition on the nowhere-vanishing Killing vector tangent to the $S^1$ factor, so the size of the circle is an additional free parameter. We expect the total number of parameters to still be at most five due to the constraint coming from removal of conical singularities, but it is not obvious that this constraint is independent of the other conditions above because it involves data at both endpoints. The near-horizon geometry of the most general known extremal black string for $\lambda = 1$ indeed depends on five parameters \cite{Kunduri:2011zr}.

\subsection{Constant co-rotating electric field}\label{sec:guess}
In the rest of this work we focus on solutions with constant co-rotating electric field. Observe that for non-rotating solutions it follows from \eqref{nhe3} that $\chi$ must always be constant (see also \cite{KL} for a different argument). We will assume that $\chi$ is non-zero. We can in fact exclude the existence of non-vacuum solutions with $\chi \equiv 0$ on (a quotient of) $S^3$ without any symmetry assumptions both in the rotating and non-rotating case. Indeed, if $\chi$ vanishes then the Maxwell equation \eqref{nhe2} implies that $\Gamma B$ is co-closed. If the first cohomology group of the cross-section $H$ is trivial, there exists a globally defined function $\zeta$ such that $\text{d}\zeta = \Gamma \star B$. The requirement that $B$ is closed then becomes
\begin{equation*}
    \nabla_a(\Gamma^{-1}\nabla^a\zeta) = 0.
\end{equation*}
Multiplying this relation by $\zeta$ and integrating by parts over compact $H$ shows that the integral of $\Gamma^{-1}\vert \nabla \zeta \vert^2$ is zero, which can only happen if $\zeta$ is constant. But then $B \equiv 0$, so the entire Maxwell field vanishes. This resolves a question posed in \cite[\S 3.1.2]{KL}. On the other hand, various examples of charged near-horizon geometries with $\chi \equiv 0$ and horizon topology $S^2 \times S^1$ are known, see \cite{Kunduri:2013gce}.

Returning to $U(1) \times U(1)$ symmetric horizons, imposing $\chi(y) = \chi_0$ is a non-zero constant, equation \eqref{eq:e3} immediately implies that $Q_1(y) = q_1$ is a constant. We assume the solution is rotating so that $q_1 > 0$. Since both $\chi(y)$ and $Q_1(y)$ are constant, \eqref{eq:e1} implies
\be
2 \Gamma'' - \frac{\Gamma'^2}{\Gamma} = 0,
\ee
which has solutions of the form
\be \label{gammaps}
\Gamma(y) = [c_1 y + c_2(1-y)]^2,
\ee
where $c_1$ and $c_2$ are positive integration constants. Now we have obtained the functional form of all unknowns and solved \eqref{eq:e1}. There remain six constants $\b,c, \chi_0, q_1,c_1,c_2$. Plugging these results back into (\ref{eq:e2}, \ref{eq:e3}, \ref{eq:cons}) gives the following algebraic equations
\begin{subequations}\label{bd:fs}
\be\label{eq:f1}
    8 c_1 c_2 - \frac{4 \hat{c}^2}{3 q_1} - q_1 - \frac{8 \chi_0^2}{3}= 0, 
\ee
\be\label{eq:f2}
    9 \hat{c} \beta + 36 \hat{c}^2 \chi_0 + 9 q_1^2 \chi_0 + 4 \sqrt{3} \l \hat{c} \chi_0 (3q_1 - 4 \chi_0^2) = 0,
\ee

\be\label{eq:f3}
  3 q_1 \left[4 \hat{c}^2 + q_1 (q_1 - 16 c_1 c_2) \right] + 12q_1^2\chi_0^2 + \left(\beta\sqrt{3} + 4\hat{c}\chi\sqrt{3} - \frac{16}{3}\lambda\chi_0^3\right)^2 = 0,
\ee
\end{subequations}
where we defined $\hat{c} \equiv c + 2 \l \chi_0^2/\sqrt{3}$ for convenience. We can solve \eqref{eq:f1} and \eqref{eq:f2} for $c_1$ and $\b$, since the coefficients of these two constants cannot vanish (if $\hat{c} = 0$ then $q_1\chi_0 = 0$, contradicting our assumptions). Plugging the result into \eqref{eq:f3}, we find the final algebraic condition
\be\label{eq:final}
    \left[2 (1-2 \l) \hat{c} - \sqrt{3} q_1\right]\left[2 (1+2 \l) \hat{c} + \sqrt{3} q_1\right] q_1 \chi_0^2 - \hat{c}^2 \left(4 \hat{c}^2 - 3 q_1^2 \right) = 0
\ee
This is a constraint between three constants, $\chi_0, q_1, \hat{c}$, invariant under gauge transformations and diffeomorphisms on the horizon $H$. Note in particular that $c_2$ drops out of this final equation because it only appears in the combination $c_1 c_2$ in \eqref{bd:fs}; hence it is a free parameter. Moreover, the equation only involves $\chi_0$ through $\chi_0^2$. This can be traced to the fact that \eqref{bd:fs} is invariant under $(\b, \chi) \to (-\b, -\chi)$, which is inherited from (\ref{bd:es}, \ref{eq:cons}). Ultimately it is a consequence of \eqref{bd:fulleqs} being invariant under flipping the Maxwell field and the orientation.  

Equation \eqref{eq:final} factorises both at $\l = 0$ and at $\lambda =1$. At $\l = 0$ it becomes
\be\label{eq:l=0}
( 4 c^2 - 3 q_1^2) (c^2 - q_1 \chi_0^2) = 0.
\ee
Setting the first (second) factor to zero corresponds to the solution in \S \ref{sec:fam1} (\S \ref{sec:fam2}). Each of these is a three-parameter family. At $\l = 1$, \eqref{eq:final} becomes
\be\label{eq:l=1}
\left(\sqrt{3}q_1 + 2\hat{c}\right) \left[\hat{c}^2 \left(2 \sqrt{3} \hat{c} - 3 q_1 \right) + 3 \chi_0^2 q_1 \left(2\sqrt{3} \hat{c} +  q_1 \right) \ \right] = 0.
\ee
Again setting each factor to zero gives a three-parameter family of solutions (see \S \ref{sec:minsugra1}). Note that the branching behaviour is consistent with the homogeneous solutions in \cite{Kunduri:2013gce}.

For general $\l$, \eqref{eq:final} is a homogeneous quartic equation in $\chi_0^2, q_1, \hat{c}$ restricted to $\chi_0^2, q_1 > 0$. The homogeneity is a consequence of the symmetry $(g,F) \mapsto (k^2g, kF)$ (with $k > 0$) of the original equations of motion \eqref{bd:fulleqs}. The solutions correspond to a collection of one-parameter families ($\l$ being the parameter) of curves in $\mathbb{RP}^2$. Taken together with $c_2$ and the scaling symmetry, the solutions have three parameters per family at each $\l$. It is possible to take these parameters to be $(c_2, q_1, \hat{c})$ by solving \eqref{eq:final} for $\chi_0^2$. Instead, below we find a more democratic rational parametrisation of the quartic \eqref{eq:final} that simplifies the final expression of our solutions\footnote{We thank Khoi Le Nguyen Nguyen for pointing out this algorithm.}. This parametrisation covers all points on the quartic curve for $\l \neq 0, 1$. 

\subsubsection{Parametrising solutions}
The idea is to intersect \eqref{eq:final} with straight lines from the origin in the $(q_1, \hat{c})$ plane, which we identify with a coordinate patch of  $\RP^2$ where $\chi_0^2 = 1$. The construction fails whenever \eqref{eq:final} contains a linear factor, which happens precisely at $\l = 0,1$. The parametrisation below will miss one family of solutions for these special values of $\l$, but covers all solutions for any other $\l$. 

We first identify the singular points on the curve. Writing \eqref{eq:final} with $\chi_0^2 = 1$ as $f(q_1, \hat{c}) = 0$, singular points are points on the curve satisfying $\partial_{q_1} f = \partial_{\hat{c}} f = 0$. For $\l \neq 0, 1$ the only possibility is $(q_1, \hat{c}) = (0,0)$. At the special values $\l = 0,1$, there are additional non-isolated singular points, corresponding to the fact that the quartic is factorisable. Now consider straight lines through the origin in the $(q_1, \hat{c})$ plane,
\be
q_1 = \frac{2}{ \sqrt{3}} \k \hat{c}, 
\ee
where $\k$ is a real parameter. Such straight lines intersect the quartic curve at points satisfying 
\be
    f\left(\frac{2}{\sqrt{3}}\k \hat{c}, \hat{c}\right) = 0
    \Longleftrightarrow \hat{c}^3 \left[3 \hat{c} (\k^2-1) - 2 \sqrt{3} \k (\k + 2 \l +1)(\k + 2 \l -1) \right] = 0, 
\ee
so for each line with $\kappa^2 \neq 1$ there is a unique point of intersection apart from the origin at
\be
\hat{c} = \frac{2 \sqrt{3} \k (\k + 2 \l +1)(\k + 2 \l -1)}{3 (\k^2-1)}.
\ee
Lines with $\kappa^2 = 1$ miss the quartic if $\lambda \neq 0,1$. We can therefore use $\k$ to parametrise the full quartic curve as 
\be\label{eq:projsol}
q_1 = \frac{4 \k^2 (\k + 2 \l +1)(\k + 2 \l -1)}{3(\k^2-1)}, \quad \hat{c} = \frac{2 \sqrt{3} \k (\k + 2 \l +1)(\k + 2 \l -1)}{3 (\k^2-1)}.
\ee
Imposing $q_1 > 0$, this continuous parametrisation is cut into four disjoint regions specified by choosing signs for the different factors in the expression for $q_1$. These correspond to coloured regions in figure \ref{fig:map} (regions II and III are separated later by the line $\kappa = -2\lambda$, see below). Note that we still have the freedom to scale \eqref{eq:projsol} by a positive factor while simultaneously scaling $\chi_0^2$ (recall that we set $\chi_0^2 = 1$). We fix this freedom as follows: previously in solving the algebraic constraints we solved for $c_1$ in terms of $(c_2, q_1, \hat{c}, \chi_0^2)$; we now instead regard $c_1$ as free and solve for $\chi_0^2$ in terms of $c_1, c_2, \k$. That is, all solutions to \eqref{bd:fs} at $\l \neq 0,1$ are now parametrised by $(c_1, c_2, \k)$. 

Finally, in order to transform to the global coordinates in \S \ref{sec:u1ans} we need to find the Killing vectors $Q_2(0)\partial_{\varphi_1} - \partial_{\varphi_2}$ and $Q_2(1)\partial_{\varphi_1} - \partial_{\varphi_2}$ vanishing at the endpoints. These vectors are equal precisely when the constant 
\begin{equation} \label{tausym}
    \tau  = q_1^{-1}\left(\beta +4c\chi_0 + \frac{8}{3\sqrt{3}}\lambda\chi^3_0\right) = -\chi_0\left(\frac{q_1}{\hat{c}} +\frac{4}{\sqrt{3}}\lambda\right)
\end{equation}
vanishes, as follows from \eqref{eq:Q2}. This can only occur for the general $\lambda$ families when $\kappa = -2\l$, and we then require $\l \in (0, \frac 12)$ in order for $q_1$ to be positive. Enforcing the absence of conical singularities imposes $\Gamma(0) = \Gamma(1)$. In view of \eqref{gammaps} this can only happen if $\Gamma$ is constant, i.e. $c_1 = c_2$. The corresponding black ring solution is homogeneous and presented in the next section.

In all other cases $Q_2$ is monotonic and non-constant, so the Killing vectors with fixed points are distinct and the horizon must be spherical. This is consistent with the topological arguments in \S \ref{sec:sas}. We can always introduce coordinates $(y,\phi_1,\phi_2)$ such that $\partial_{\phi_1} \propto Q_2(1)\partial_{\varphi_1} - \partial_{\varphi_2}$ and $\partial_{\phi_2} \propto Q_2(0)\partial_{\varphi_1} - \partial_{\varphi_2}$, normalised appropriately so that \eqref{bd:cond} holds. The corresponding near horizon geometries for general $\lambda$ are presented in \S \ref{sec:genlam}. 

\section{New solutions}\label{sec:EMsol}
We present here novel three-parameter families of near-horizon geometries: two families for each special Chern-Simons coupling ($\l = 0,1$), and multiple families for other $\l$. These solutions include and generalise the homogeneous horizons classified in \cite{Kunduri:2013gce}. Novel solutions presented in this section all have constant co-rotating electric field in the sense that 
\be \label{eq:simpcond}
F_{v\rho} = \Gamma(y) \psi(y) = const. 
\ee

In \S \ref{sec:fam1} and \S \ref{sec:fam2} we present the two families at zero $\l$. Family 1 in \S \ref{sec:fam1} is connected to the static $U(1) \times U(1)$ solution \eqref{eq:static}, and family 2 in \S \ref{sec:fam2} has a vacuum limit intersecting \eqref{eq:vac} at a two-parameter slice. At $\l = 1$, we find two families, one having only a static and no vacuum limit, and the other having both. The former is presented explicitly in \S\ref{sec:minsugra1}, and the latter is contained in the general $\l$ families presented in \S \ref{sec:genlam}. 

There is one non-spherical family of solutions with $S^2 \times S^1$ topology and two parameters for any $0 < \l < 1/2$. It is homogeneous and hence already (implicitly) included in the classification in \cite{Kunduri:2013gce}. We present it here for completeness:
\begin{subequations}
\begin{gather}
    \psi(y) = \frac{1}{\g_0}\sqrt{\frac{6 (4\l^2-1)}{4 \l^2 - 3}}, \quad B = \g_0 \sqrt{\frac{3}{2(3 - 4\l^2)}} \, \dd y \wedge \dd \phi_2, \\
    \Gamma(y) = \g_0^2, \quad K = \frac{4 \sqrt{2} \g_0 \l}{L \sqrt{3 - 4\l^2}}\frac{\partial}{\partial \phi_1}, \\
    \g = L^2 \dd\phi_1^2 + \frac{\g_0^2}{4} \left(\frac{1}{y (1-y)}\dd y^2 + 4y(1-y) \dd \phi_2^2 \right),
\end{gather}
\end{subequations}
where $L >0$ and $\g_0 \neq 0$ are the two free parameters. For $\l = 0$ we recover the near-horizon geometry of the $Q = \pm P$ extremal Reissner-Nordstr\"om black hole times $S^1$, as discussed in \cite{KL}. For $\lambda = \frac 12$ the electric field vanishes and we obtain the (locally)  AdS$_3 \times S^2$ direct product solution.

\subsection{Einstein-Maxwell family 1 -- connected to static solutions}\label{sec:fam1}
This solution corresponds to setting the first factor in \eqref{eq:l=0} to zero. In terms of the Ansatz introduced in \eqref{bd:glob}, is given by
\begin{subequations}\label{eq:sol1}
\begin{gather}
    \psi(y) = -\frac{\sqrt{3 }q}{\Gamma(y)}, \quad B_1(y) = -\frac{c_1^2 \sqrt{3(c_1 c_2 - q^2)} }{2\,\Gamma(y)}, \quad B_2(y) = \frac{ c_2^2 \sqrt{3(c_1 c_2 - q^2)}}{2\,\Gamma(y)},\\
    \Gamma(y) = \left[c_1 y + c_2 (1-y)\right]^2, \quad \omega_1 = \frac{2 c_2 \sqrt{c_1 c_2 - q^2}}{c_1 q}, \quad \omega_2 = \frac{2 c_1 \sqrt{c_1 c_2 - q^2}}{c_2 q},\\
    f_{11}(y) =  \frac{c_1^2 \left[q^2 (1-y)+ c_1^2 y \right]}{\Gamma(y)} (1-y),  \quad f_{22}(y) =  \frac{c_2^2 \left[c_2^2 (1-y) + q^2 y \right]}{\Gamma(y)} y,\\
    \quad f_{12}(y) = - \frac{ c_1 c_2 (c_1 c_2 - q^2)}{\Gamma(y)}y(1-y).
\end{gather}
\end{subequations}
Note the parameter range $c_1, c_2 >0$ and $ c_1 c_2 \geq q^2 > 0$. Setting $q^2 = c_1 c_2$ recovers the known static solution \eqref{eq:static}, and setting $c_1 = c_2$ recovers the known homogeneous solution \eqref{eq:homo1}. Conserved quantities of this solution are given by
\begin{alignat}{2}\label{eq:consol1}
    S &= \frac{\pi^2}{2} c_1 c_2 |q|, \hspace{.5cm}
     &&Q = \frac{\sqrt{3} \pi q^2}{4} , \nonumber \\
    J_1 &= -\frac{\pi c_1  q^2\sqrt{c_1 c_2 - q^2}}{2 c_2}, \hspace{.5cm}
      &&J_2 = -\frac{\pi c_2  q^2\sqrt{c_1 c_2 - q^2}}{2 c_1}.
\end{alignat}
It is easy to check that the entropy satisfies
\be\label{eq:entropy1}
S = \frac{4 \sqrt{\pi}{|Q|^{3/2}}}{3^{3/4}} + \frac{3^{3/4} \pi^{3/2}}{4} \frac{|J_1 J_2|}{|Q|^{3/2}}. 
\ee
This confirms the numerical prediction of this relation in \cite{Horowitz:2024kcx}. This entropy formula is valid for the entire parameter range, but, as in \cite{Horowitz:2024kcx}, at $q^2 = \frac{c_1 c_2}{2}$ it intersects another branch of solutions that admits a vacuum limit. At the intersection point, the conserved quantities satisfy
\be \label{int}
|J_1 J_2| = \frac{16 |Q|^3}{3 \sqrt{3} \pi}. 
\ee
It may be verified that the Smarr formula \eqref{eq:Smarr} holds with $\mu = \chi + K \cdot C$ given by
\begin{equation}
    \mu = -\frac{\sqrt{3}(2q^2-c_1c_2)}{\vert q \vert}.
\end{equation}
It is interesting to look at the free energy (or more precisely, the free energy over temperature in the zero temperature limit) defined as a Legendre transform of the entropy
\be\label{eq:fdef}
I(\omega_1, \omega_2, \mu) = \left. \text{ext.}\right|_{J_1, J_2, Q} \left( S(J_1, J_2, Q) + \frac{\pi}{2}\omega_1 J_1 + \frac{\pi}{2}\omega_2 J_2 + \pi\mu Q \right).
\ee
Note that the chemical potentials $(\omega_i, \mu)$ are defined up to an overall rescaling, which we fixed by requiring $\alpha = 4$. The prefactors in \eqref{eq:fdef} would be different if we rescale the chemical potentials but the function $I$ remains unchanged. Using \eqref{eq:entropy1}, the free energy is computed to be
\be\label{eq:fsol1}
I(\omega_1, \omega_2, \mu) = \frac{4\pi^2 \mu^3}{3\sqrt{3} (4 - \omega_1 \omega_2)^2}. 
\ee
Again, due to the rescaling ambiguity of potentials, constants in \eqref{eq:fsol1}  (e.g. the $4$ in the denominator) would  change under a rescaling of $\alpha$.
It is straightforward to check that the inverse Legendre transform indeed recovers \eqref{eq:entropy1}. 

The near-horizon geometry should correspond to a solution of this extremization. That is, substituting in values of $\omega_i, \mu, J_i, Q$ from (\ref{eq:sol1}, \ref{eq:consol1}), the right-hand-side of \eqref{eq:fdef} without extremization should equal \eqref{eq:fsol1}. We have checked that this indeed holds, and explicitly the on-shell free energy \eqref{eq:fsol1} is given by
\be
I = -\frac{\pi^2}{4} (2q^2 - c_1 c_2) |q|. 
\ee
An equivalent statement is that perturbations within our family of solutions satisfy a first law (see \cite{Hajian:2013lna}), which in our conventions reads
\begin{equation} \label{firstlaw}
    \delta S + \frac{\pi}{2}\omega_1\delta J_1 + \frac{\pi}{2}\omega_2\delta J_2 + \pi\mu\delta Q = 0.
\end{equation}

\subsection{Einstein-Maxwell family 2 -- Kaluza-Klein}\label{sec:KK}\label{sec:fam2}
Our second branch of solutions is 
\begin{subequations}\label{eq:sol2}
\begin{gather}
     \psi(y)  = \frac{\sqrt{2 c_1c_2 - q^2}}{\Gamma(y)}, \quad B_1(y) = \frac{c_1^2 \sqrt{2 c_1c_2 - q^2}}{2\, \Gamma(y)}, \quad B_2(y) = -\frac{c_2^2 \sqrt{2 c_1c_2 - q^2}}{2\, \Gamma(y)},\\
    \Gamma(y) = [c_1 y + c_2 (1-y)]^2, \quad \omega_1 = \frac{4}{\omega_2} = \frac{2 c_2}{c_1},\\
    f_{11}(y) = \frac{c_1^2 \left[q^2 (1-y)+ c_1^2 y\right]}{\Gamma(y)} (1-y), \quad 
    f_{22}(y) =\frac{c_2^2 \left[q^2 y + c_2^2 (1-y)\right]}{\Gamma(y)} y,\\
    f_{12}(y) = - \frac{ c_1c_2 (c_1c_2 - q^2)}{\Gamma(y)}y (1-y), 
\end{gather}
\end{subequations}
Notice the parameter range $c_1,c_2>0$ and $2c_1c_2 \geq q^2 > 0$. Since only $q^2$ appears in the solution, we can without loss of generality take $q>0$. The thermodynamic quantities are
\begin{equation}
    S = \frac{\pi^2 c_1c_2  q}{2},\hspace{.5cm}
    J_1 = -\frac{\pi c_1^2  q}{4}, \hspace{.5cm}
    J_2 = -\frac{\pi c_2^2  q}{4},\hspace{.5cm}
    Q = -\frac{\pi}{4}q \sqrt{2 c_1c_2 - q^2}.
\end{equation}
The entropy satisfies
\be\label{eq:Ent1}
S = 2 \pi \sqrt{|J_1 J_2|}. 
\ee
This family meets the first family of solutions at $q = \sqrt{\frac{c_1c_2}{2}}$, where it satisfies \eqref{int}. For $q^2 = 2c_1c_2$ it does not reduce to the extremal Myers-Perry horizon, but rather to a different two-parameter subfamily of the general 3-parameter family of vacuum near-horizon geometries constructed in \cite{Hollands:2010bf, KLvac}. This subfamily  is homogeneous for $c_1 = c_2$, when it agrees with the horizon of the extremal Myers Perry black hole with equal angular momenta, as well as with the horizon of the extremal Kaluza-Klein black hole \cite{KK} with zero angular momentum. A region of the general $c_1 \neq c_2$ vacuum solution also intersects the fast rotating extremal KK horizons in \cite{KK}. Comparing to \cite{KLvac}, we find that in our parametrisation this happens when
\begin{equation}
   \varphi - \sqrt{\varphi} < \frac{c_1}{c_2} < \varphi + \sqrt{\varphi},
\end{equation}
where $\varphi = \frac 12(1 + \sqrt{5})$ is the golden ratio.

Observe that the induced metric $\gamma$ on $H$ is exactly the same for both families \eqref{eq:sol1} and \eqref{eq:sol2}, with the only difference being the range of the parameter $q$. In particular, the second family contains a two-parameter subfamily of non-static solutions (with non-zero angular momentum and non-vanishing matter fields), for which the horizon metric is the same as that for the static solution \eqref{eq:static}. This happens at $q^2 = c_1c_2$. Further setting $c_1 = c_2$ gives a charged and rotating solution for which $\gamma$ is the round metric on $S^3$.

A special feature of the family is that the 5D Maxwell field is null: $\vert F\vert_g^2 = \vert B \vert^2_\gamma - 2\psi^2 = 0$. This indicates that these solutions can be obtained from a 6D vacuum near-horizon geometry via Kaluza-Klein reduction along a space-like Killing vector with constant norm. The constant norm requirement ensures the absence of a dilaton field in 5D. Explicitly, it may be verified that the 6D metric 
\begin{equation} \label{g6d}
    g_6 = g + (\text{d}z + 2A)^2
\end{equation}
is Ricci-flat, where $A$ is the 5D gauge potential satisfying d$A = F$. In \cite{Hollands:2010bf} a seven-parameter family of vacuum near-horizon geometries was constructed, with horizon cross sections $S^3 \times S^1$ equipped with a $U(1)^3$-invariant metric. One parameter corresponds to the size of the $S^1$ (i.e. rescaling $z$), and requiring the norm of the Killing field tangent to the $S^1$ to be constant fixes a further two parameters. One of the remaining four parameters appears only in front of the d$z^2$ component of the metric \eqref{g6d}; remarkably, it turns out that adding an arbitrary constant times d$z^2$ preserves Ricci-flatness. From the 5D perspective this corresponds to adding a constant times $A \otimes A$ to $g$ combined with a rescaling of the matter content, which is a transformation that can be absorbed in the other three parameters. Fixing this redundancy reduces the number of parameters to three and recovers the family \eqref{eq:sol2}.

The entropy relation \eqref{eq:Ent1} can also be obtained from the 6D vacuum solutions on $S^3 \times S^1$. The full seven-parameter family satisfies the same relation \eqref{eq:Ent1}, where the subscripts $1,2$ refer to the Killing vectors tangent to the $S^3$. In particular, it does not depend on the angular momentum $J_3$ along the $z$ direction, which becomes (up to sign) the charge in 5D. The relation remains unchanged upon dimensional reduction, under which the 6D angular momenta $J_{1,2}$ become the angular momenta in 5D. Moreover, the fact that the preferred Killing vector $K$ does not have components along the $S^1$ translates to the 5D solution satisfying $\mu = 0$. The charged version of the general extremal Myers-Perry horizon is expected to satisfy the same entropy relation \eqref{eq:Ent1}, as numerically checked in \cite{Horowitz:2024kcx}. In that case the Maxwell field is not null, so this  cannot be explained by the Kaluza-Klein reduction process. It is possible that our solutions and the charged Myers-Perry solutions are different slices of a more general family satisfying \eqref{eq:Ent1}. 

Given the entropy relation \eqref{eq:Ent1}, the free energy of this branch of solutions is computed as
\be
I(\omega_1, \omega_2, \mu) = \left.\text{ext.}\right|_{J_1, J_2, Q} \left(2 \pi \sqrt{J_1 J_2} + \frac{\pi}{2}\omega_1 J_1 + \frac{\pi}{2}\omega_2 J_2 + \pi \mu Q\right),
\ee
which only has solutions if 
\be\label{eq:constr}
\omega_1 = \frac{4}{\omega_2}, \quad \text{and} \quad \mu = 0.
\ee
Note that there are no such conditions between the potentials in the analogous computation for family 1 (\S\ref{sec:fam1}). In particular, the fact that $\mu = 0$ is a consequence of the entropy relation \eqref{eq:Ent1} being independent of $Q$. Indeed, these conditions are satisfied by the explicit solution \eqref{eq:sol2}. Under these conditions the free energy vanishes. Conversely, starting with vanishing free energy and imposing these constraints \eqref{eq:constr}, the following inverse Lengendre transform gives back the entropy relation \eqref{eq:Ent1}: 
\be\label{eq:ext2}
S(J_1, J_2, Q) = \left.\text{ext.}\right|_{\omega_1, \omega_2, \mu, \Lambda, \Theta} \left( - \frac{\pi}{2} \omega_1 J_1 - \frac{\pi}{2} \omega_2 J_2 - \pi \mu Q + \Lambda \mu + \Theta \left(\omega_1 - \frac{4}{\omega_2} \right) \right),
\ee
where $\Lambda$ and $\Theta$ are Lagrange multipliers. The extremum is given by 
\be
\Theta = \frac{\pi}{2} J_1, \quad \Lambda =  \pi Q, \quad \omega_1 = \frac{4}{\omega_2} = 2 \sqrt{\frac{J_2}{J_1}}, \quad \mu = 0,
\ee
and plugging these into \eqref{eq:ext2} we get back $S = 2\pi \sqrt{J_1 J_2}$ (we have without loss of generality taken $J_1 J_2>0$ in this calculation). Since the extremal charged Myers-Perry horizons are expected to satisfy the same entropy relation \cite{Horowitz:2024kcx}, our analysis holds for those horizons as well. 

\subsection{Minimal supergravity family 1}\label{sec:minsugra1}
As is the case for $\lambda = 0$, a family of solutions at $\lambda = 1$ connected to static solutions is missed by our rational parametrisation. The solution is
\begin{subequations} \label{sugrafam1}
\begin{multline}
    \gamma = \frac{\Gamma(y)}{4 y (1-y)} \dd y^2 + \frac{c_1^2 [3 c_1^2 y + q^2 (1-y)]}{3 \Gamma(y)} (1-y) \dd \phi_1^2 \\
    + \frac{2c_1 c_2 (3 c_1 c_2 - q^2)}{3\Gamma(y)} y (1-y) \dd \phi_1 \dd \phi_2 + \frac{c_2^2[3 c_2^2 (1-y) +  q^2 y]}{3 \Gamma(y)^2}y\dd \phi_2^2,
\end{multline}
\begin{equation}
    \Gamma(y) = [c_1 y + c_2 (1-y)]^2, \quad \psi(y) = -\frac{q}{\Gamma(y)}, \quad B = \frac{\sqrt{3 c_1 c_2 - q^2}}{2 \Gamma(y)} \dd y \wedge \left(c_1^2 \dd \phi_1+ c_2^2 \dd \phi_2 \right),
\end{equation}
\begin{equation}
    \omega_1 = \frac{2 c_2 \sqrt{3 c_1 c_2 - q^2}}{c_1 q}, \quad \omega_2 =  -\frac{2 c_1 \sqrt{3 c_1 c_2 - q^2}}{c_2 q},
\end{equation}
\end{subequations}
where $c_1, c_2 >0 $ and $0< q \leq \sqrt{3 c_1 c_2}$. This solution reduces to the $U(1)^2$-symmetric static solution \eqref{eq:static} (which solves the near-horizon equations for any $\l$) at $q^2 = 3 c_1 c_2$, but it does not have a vacuum limit. The homogeneous $(c_1 =c_2$) case of this solution agrees with one branch of the CCLP solutions with equal angular momenta, the $\hat{\delta} = -1$ branch in conventions of \cite{Horowitz:2024kcx} (see Appendix \ref{app:knownsugra}). This is also the near-horizon geometry of the BMPV black hole \cite{BMPV}. Its conserved charges are given by
\be
Q = \frac{\sqrt{3} \pi}{4} c_1 c_2, \quad  J_1 = \frac{1}{4} c_1^2 \pi \sqrt{c_1 c_2 - \frac{q^2}{3}}, \quad J_2 = -\frac{1}{4} c_2^2 \pi \sqrt{c_1 c_2 - \frac{q^2}{3}}, \quad S = \frac{\pi^2 c_1 c_2 q}{2 \sqrt{3}}.
\ee
It has a simple entropy relation
\be \label{Scclp}
S = \sqrt{\frac{16 \pi}{3 \sqrt{3}} Q^3 + 4 \pi^2 J_1 J_2} \, .
\ee 
Equation \eqref{Scclp} also holds for the CCLP horizons, even though these solutions do not agree with \eqref{sugrafam1} in the non-homogeneous case.

The family analogous to family 2 in \S\ref{sec:fam2} for minimal supergravity is included in the general parametrisation below in \S \ref{sec:genlam}, so we will not write it explicitly here. A major difference from the pure Einstein-Maxwell solution is that it has both vacuum and static limits. 

\subsection{General Chern-Simons coupling}\label{sec:genlam}
There is a three-parameter family of solutions for any $\l$ extending the Kaluza-Klein family in \S \ref{sec:fam2}. The solutions in fact consist of several disjoint regions. The entropy relation may differ between regions, but the fields have the same expression in terms of parameters $(c_1, c_2, \k, \l)$. Here $\k$ takes values in different ranges for different branches. Below we first present the general solution and then separate each region more carefully. The metric is
\begin{subequations} \label{eq:lsol}
\begin{multline}\label{eq:lsolution}
\gamma = \frac{\Gamma}{4 y (1-y)}\dd y^2 + \frac{c_1^3 \left[c_1 \, F(\k,\l) \, y + c_2 \, H(\k,\l)\, (1-y)\, \right]}{F(\k, \l)\, \Gamma}(1-y) \dd \phi_1^2 \\
- \frac{2 \, c_1^2\, c_2^2\, G(\k, \l)}{F(\k,\l)\, \Gamma}\, y (1-y)\, \dd \phi_1 \dd \phi_2 + \frac{c_2^3 \left[c_2 F(\k, \l)\, (1-y) + c_1\, y H(\k,\l)\right]}{F(\k, \l)\, \Gamma} \, y \, \dd \phi_2^2, 
\end{multline}
where
\begin{equation}
\begin{aligned}
    \Gamma(y) &= \left[c_1 y + c_2(1-y) \right]^2,\\
    F(\k, \l) &= \k^4 + 4 \l \k^3 + 2 (1 + 2 \l^2) \k^2 + 4 \l \k - 3 + 4 \l^2, \\
    H(\k, \l) &= 2 (\k^2 - 1)(\k + 2 \l)^2, \\
    G(\k, \l) &= (\k^2 -3)(\k + 2 \l -1) (\k + 2\l + 1).
\end{aligned}
\end{equation}
Components of the Maxwell field are given by 
\be
\psi = \epsilon \frac{\sqrt{3 c_1 c_2}}{\Gamma} \sqrt{\frac{2 (\k^2 -1)}{F(\k,\l)}},
\ee
\be
B = \epsilon \frac{\sqrt{3 c_1 c_2}}{\Gamma} \sqrt{\frac{(\k + 2 \l -1)(\k + 2 \l +1)}{2 F(\k,\l)}} \, \dd y \wedge \left(c_1^2 \dd \phi_1 + c_2^2 \dd \phi_2 \right),
\ee
where $\epsilon\in \{1,-1\}$ specifies the sign of the square root. The choice of these signs depends on the parameters $\k$ and $\l$, see table \ref{tab1}. Note that these particular signs are fixed because we have chosen the orientation to be $\dd y \wedge \dd \phi_1 \wedge \dd \phi_2$. It is a symmetry of the equations of motion to simultaneous flip the orientation and the Maxwell field.. 
The Killing vector is given by
\begin{align}
\omega_1 &=  \frac{2 \k c_2}{c_1 (\k + 2 \l)}\sqrt{\frac{(\k + 2\l -1)(\k + 2 \l +1)}{(\k^2 -1)}}, \\
\omega_2 &= - \frac{2 \k c_1}{c_2 (\k + 2 \l)}\sqrt{\frac{(\k + 2\l -1)(\k + 2 \l +1)}{(\k^2 -1)}}.
\end{align}
\end{subequations}
The solution is only real and regular in the parameter regions given in table \ref{tab1}. These correspond to the regions in which the expression \eqref{eq:projsol} for $|K|^2$ is positive. The solution for $\kappa = -2\lambda$ with $\lambda \leq \frac 12$ is only regular on a ring $S^2 \times S^1$ and was presented at the beginning of this section. A plot of the parameter regions in the two dimensional parameter space $(\k, \l)$ is presented in figure \ref{fig:map}. 
\begin{table}[h]
\centering
\begin{tabular}{|c|c|c|c|}
\hline
I & $1 < \k$  & $\text{any} \, \, \l$ & $-$ \\
\hline
II & $\text{Max}(-2\l,-1) < \k < 1-2\l $ & $\l < 1$ & $-$ \\
\hline
III & $-1 < \k < -2 \l$ & $\l <1/2$ & $+$\\
\hline
IV & $1 - 2 \l < \k < -1 $ & $\l > 1$ & $-$\\
\hline
V & $\k < -1 - 2\l$ &  $\text{any}\, \,  \l$ & $+$\\
\hline
\end{tabular}
\caption{\textbf{Parameter range} of our solutions and the corresponding choice of signs of the matter fields. There are five disjoint regions, labelled I to V as in the first column, with ranges of $\k$ and $\l$ in the second and third column, respectively. The sign $\epsilon$ is given in the fourth column.}
\label{tab1}
\end{table}

\begin{figure}[h]
    \centering
    \includegraphics[width=.8\linewidth]{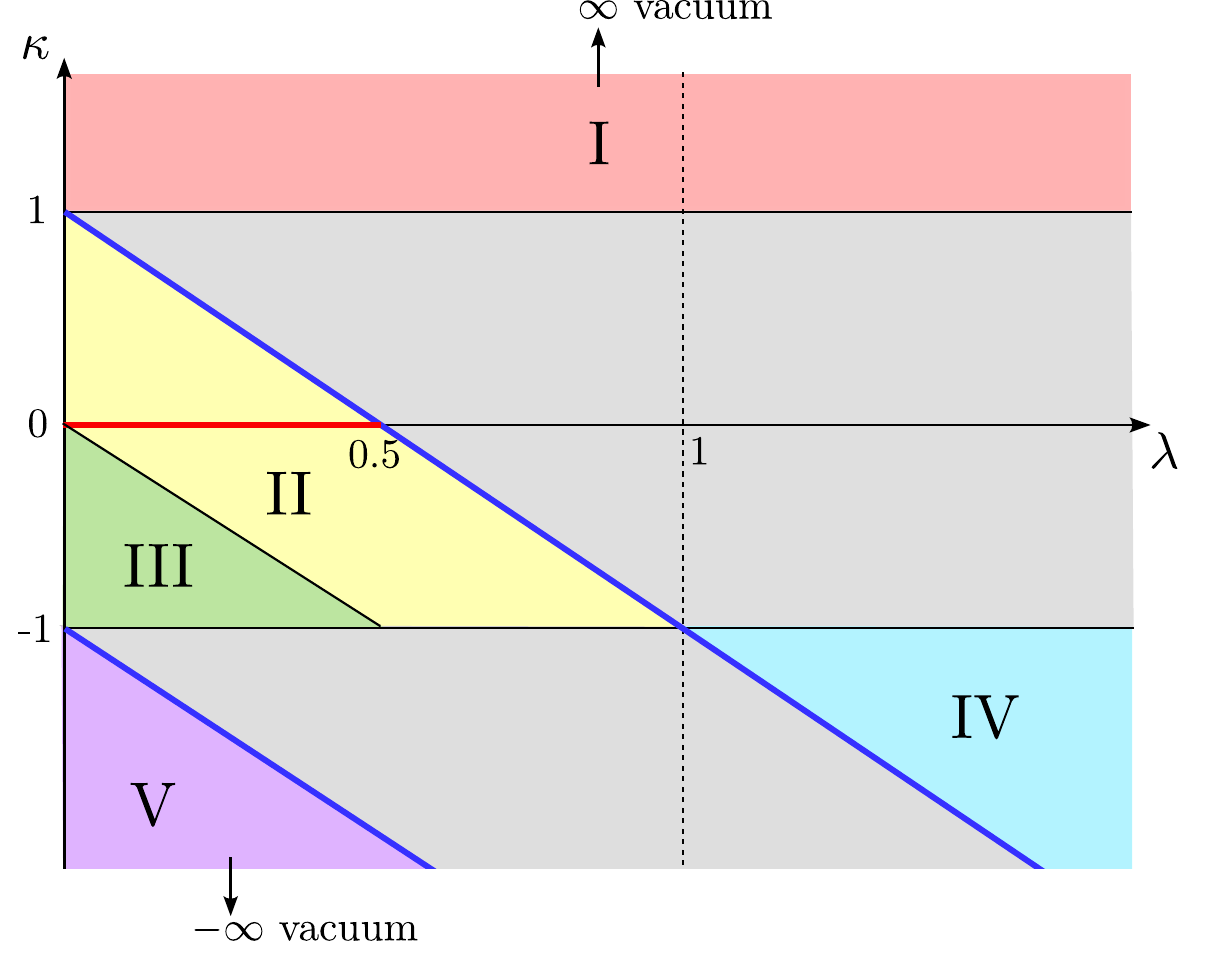}
    \caption{{\bf Parameter range} of the solutions. Regions in grey are not allowed. The two blue boundaries correspond to the same static $U(1) \times U(1)$ solution in \eqref{eq:static}. The red boundary is a new two-parameter family of static solutions. Extending either region I to $\kappa \to \infty$ or region V to $\k \to -\infty$ recovers the same vacuum limit -- the two-parameter family described in \S \ref{sec:fam2}.}
    \label{fig:map}
\end{figure}
As $\l \to 0$ all the families I, II, III, V reduce to the same solution presented in \S \ref{sec:fam2}, with $\kappa$ related to $q$ via
\begin{equation*}
    \kappa^2 = \frac{3q^2}{2c_1c_2-q^2}.
\end{equation*}
Note that in this case $\kappa \mapsto -\kappa$ is a symmetry. The solutions at $\kappa^2 = 1$ correspond to the intersection of family 1 in \S\ref{sec:fam1} with family 2 in \S\ref{sec:fam2}.

For $0 < \l < \frac 12$ and $\kappa = 0$ we encounter a static solution that to the best of our knowledge is new. It has non-zero electric and magnetic fields, and the same $U(1)^2$ symmetry for $c_1 \neq c_2$. It reduces to the extremal Reissner-Nordstr\"om-Tangherlini horizon for $c_1 = c_2$ and to the two-parameter static $U(1)^2$-symmetric solution in \cite{KL} with vanishing magnetic field for $\lambda = \frac 12$. It is interesting to note that the angular momenta $J_{1,2}$ are non-zero despite the solution being static.

The solutions in regions I and V can be connected into a single family upon replacing $\kappa$ by $\kappa^{-1}$, so that the vacuum limit is now at $\kappa = 0$. The resulting family has both a vacuum and a static limit, something which does not happen for $\l = 0$. At $\l = 1$, the homogeneous subfamily with $c_1 = c_2$ arises as the near-horizon geometry of the CCLP solutions (specifically, it is the branch $\hat{\delta} = 1$ in conventions of \cite{Horowitz:2024kcx}, see also Appendix \ref{app:knownsugra}).

\subsubsection{Entropy relations}
The conserved charges share the same expression across all regions, up to choices of signs:

\begin{equation} \label{entgen}
\begin{aligned}
    S = (c_1 & c_2)^{3/2} \pi^2 |\k + 2 \l| \sqrt{\frac{\k^2 - 1}{2\, F(\k, \l)}},\quad Q = -\frac{\sqrt{3} c_1 c_2 \pi \, \X(\k, \l)}{2 \,F(\k, \l)},\\
    J_1 &= -\sigma c_1^2 \pi \Sigma(\k,\l) \sqrt{\frac{ c_1 c_2 (\k + 2 \l - 1) (\k + 2\l +1)}{8 F(\k, \l)^3}}, \\ 
    J_2 &= \sigma c_2^2 \pi \Sigma(\k,\l) \sqrt{\frac{ c_1 c_2 (\k + 2 \l - 1) (\k + 2\l +1)}{8 F(\k, \l)^3}},
\end{aligned}
\end{equation}
where the sign $\sigma \in \{-1,1\}$ is positive for regions I, III, IV, and negative for regions II, V, and  
\be
\begin{aligned}
    \X(\k, \l) &= \k^3 + 3\l \k^2 + (4 \l^2-1)\k + \l (4 \l^2 -3), \\
\Sigma(\k, \l) &= \k^5 + 4 \l \k^4 + 2 (1 + 2 \l^2)\k^3 + 4 \l \k^2 + (4 \l^2-3) \k + 8 \l (\l^2 - 1) . 
\end{aligned}
\ee

At $\l = 1$, the two three-parameter families in region I and V both satisfy a simple entropy relation agreeing with \eqref{Scclp}, 
\be
\left.S^{(I,V)}\right|_{\l = 1} = \sqrt{\left| \frac{16 \pi}{3 \sqrt{3}} (Q^{(I,V)})^3 + 4\pi^2 J_1^{(I,V)} J_2^{(I,V)} \right|}.
\ee

For general $\lambda$ we do not have a closed form expression for the entropy relations. For a fixed ratio $\vert J_1/ J_2 \vert = c_1^2 / c_2^2$, we may plot the curves \eqref{entgen} as a function of $\kappa$ in a plane with the (normalised) entropy $S/ \vert Q \vert^{\frac 32}$ and angular momentum $\vert J_1\vert / \vert Q\vert^{\frac 32}$ on the axes. The corresponding curves for regions III and V at $\l = 0.01$ are plotted in figure \ref{fig:two-plots1} for a (restricted) range of $\kappa$. There is perfect qualitative agreement with the plots in \cite[figure 2]{Horowitz:2024kcx} based on numerical solutions. The orange curve has further turning points outside the plotted region.  There are two further branches (regions I and II) at $\l = 0.01$, whose entropy plots are presented in figure \ref{fig:two-plots}. Both branches become degenerate (zero entropy) at 
\begin{equation*}
    \frac{\vert J_1\vert}{\vert Q \vert^{\frac 32}} = \frac{2c_1}{3^{\frac 34}c_2\sqrt{\pi\lambda}} \approx8.083.
\end{equation*}

\begin{figure}[h!]
    \centering

    \begin{minipage}{0.48\textwidth}
        \centering
        \includegraphics[width=\linewidth]{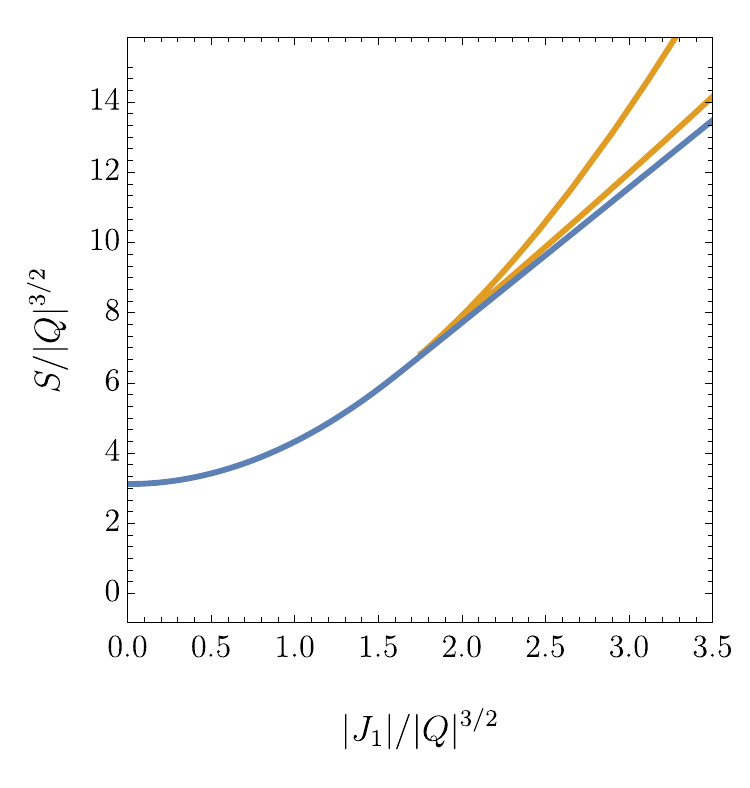}
    \end{minipage}
    \hfill
    \begin{minipage}{0.48\textwidth}
        \centering
        \includegraphics[width=\linewidth]{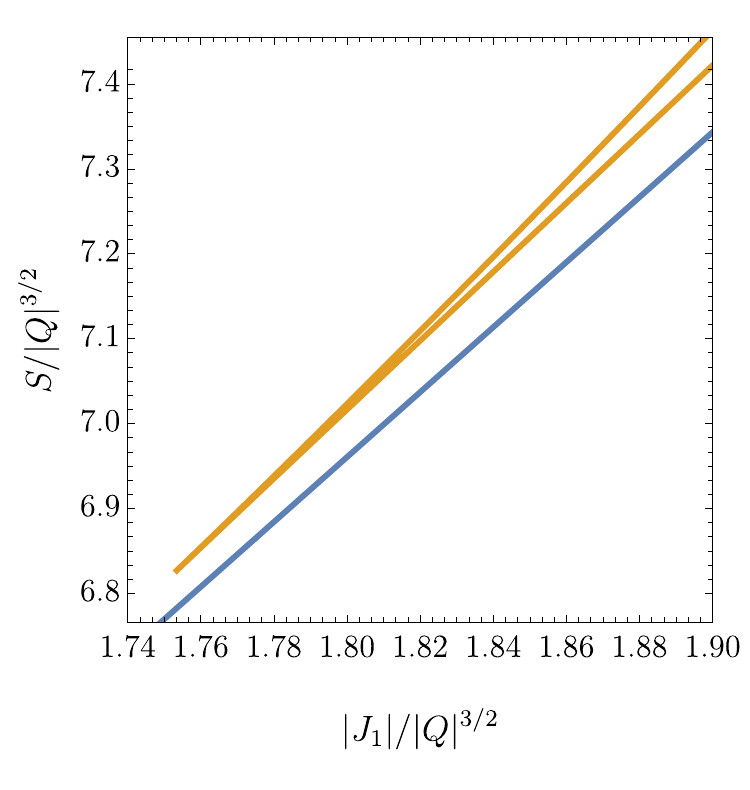}
    \end{minipage}

    \caption{\textbf{Left Panel}: normalised entropy and angular momentum for region III (orange) and region V (blue), with fixed $|J_1|/|J_2| = 3/8$ and $\l = 0.01$. \textbf{Right Panel}: a blow-up of the centre of the left panel.}
    \label{fig:two-plots1}
\end{figure}

\begin{figure}[h!]
    \centering

    \begin{minipage}{0.48\textwidth}
        \centering
        \includegraphics[width=\linewidth]{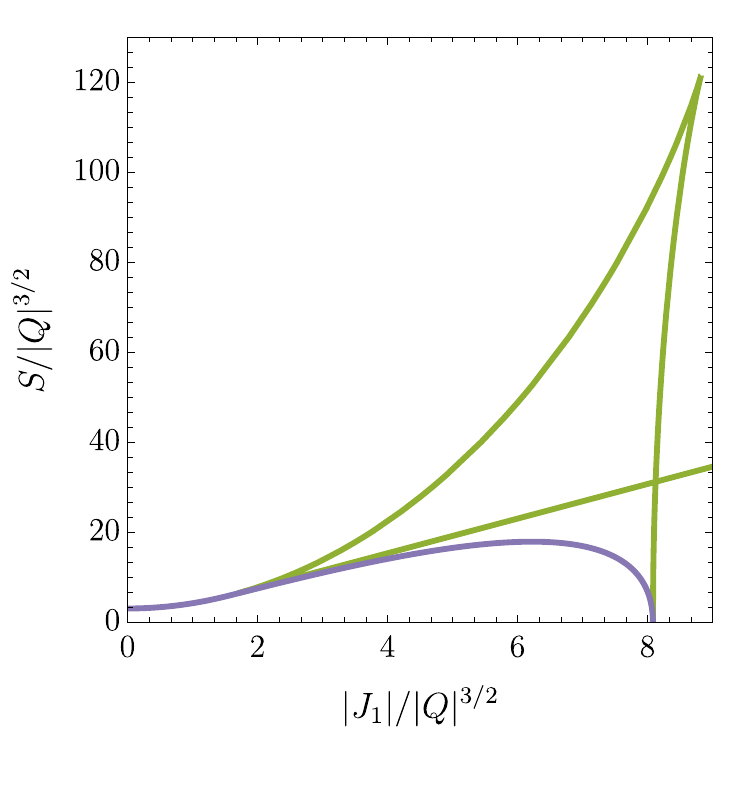}
    \end{minipage}
    \hfill
    \begin{minipage}{0.48\textwidth}
        \centering
        \includegraphics[width=\linewidth]{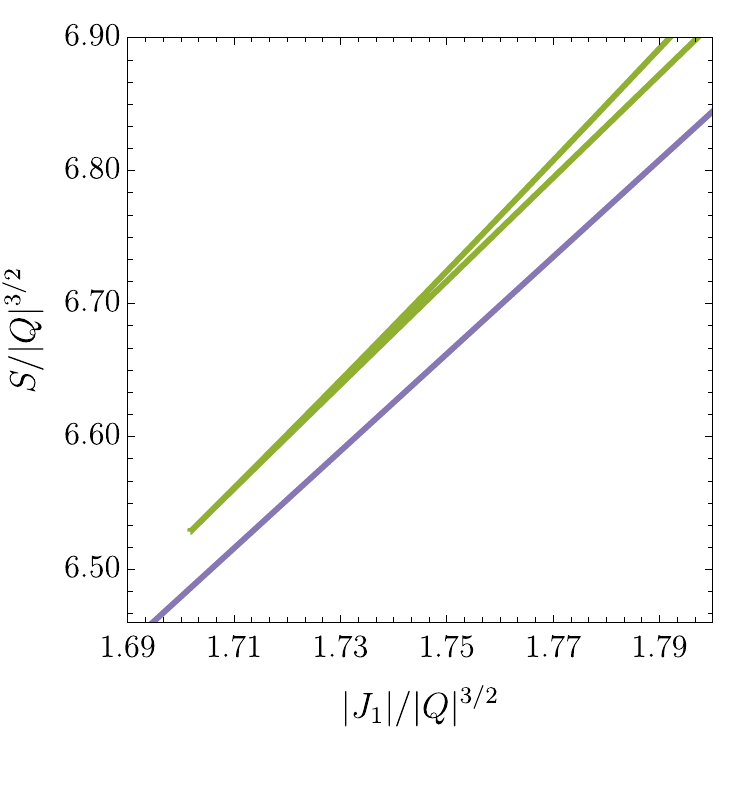}
    \end{minipage}

    \caption{\textbf{Left Panel}: normalised entropy and angular momentum for region I (green) and region II (purple), with fixed $|J_1|/|J_2| = 3/8$ and $\l = 0.01$.  The purple curve attains a static limit at zero angular momentum, but the green curve has a turning point at positive $\vert J_1\vert $. \\ \textbf{Right Panel}: a blow-up of the turning point of the green curve. }
    \label{fig:two-plots}
\end{figure}

We can further compare our results to the perturbative analysis in \cite{Horowitz:2024kcx}. For general values of $\l$, expanding in charge around the vacuum limits in regions I and V, the entropy relation has the leading order form
\be
S^{(I,V)} = 2 \pi \sqrt{\left \vert J^{(I,V)}_1 J^{(I,V)}_2\right\vert}\left[1 + \frac{4\l}{3\sqrt{3}}\frac{(Q^{(I,V)})^3}{J^{(I,V)}_1 J^{(I,V)}_2} + \mathcal{O}(\delta^4) \right]. 
\ee
Here we have assigned $Q = \mathcal{O}(\delta)$. The first correction to the vacuum entropy relation appears at order $Q^3$, as anticipated by the perturbative analysis in \cite{Horowitz:2024kcx} (even though the authors perturbed around a different background given by the Myers-Perry horizons). In regions II, IV, V, there exist a zero angular momentum limit that recovers the static $U(1) \times U(1)$ solution \eqref{eq:static}. Expanding around this limit, the entropy relations are
\begin{gather}
\frac{S^{(II,IV)}}{|Q^{(II,IV)}|^{3/2}} = \frac{4 \sqrt{\pi}}{3^{3/4}} - \frac{3^{3/4}\pi^{3/2}(1 - 2 \l)}{2 (2 - \l)} \frac{J_1^{(II,IV)} J_2^{(II,IV)}}{|Q^{(II,IV)}|^3} + \mathcal{O}(\varepsilon^3),\\
\frac{S^{(V)}}{|Q^{(V)}|^{3/2}} = \frac{4 \sqrt{\pi}}{3^{3/4}} - \frac{3^{3/4}\pi^{3/2}(1 + 2 \l)}{2 (2 + \l)} \frac{J_1^{(V)} J_2^{(V)}}{|Q^{(V)}|^3} + \mathcal{O}(\varepsilon^3),
\end{gather}
where $\varepsilon$ is proportional to the magnitude of angular momentum with $J^{(II,IV, V)} = \mathcal{O}(\varepsilon)$. In our conventions we always have $J_1 J_2<0$. The expansion in region II, IV indeed matches that in \cite{Horowitz:2024kcx} for $J_1 J_2 < 0$; the leading order term of the expansion in region V matches that in \cite{Horowitz:2024kcx} for $J_1 J_2 >0$ but with a minus sign, which we expect to be a convention issue. 

\subsubsection{Some special limits}
Apart from the vacuum and static cases, there are two additional curious families of solutions: a non-static family with zero angular momentum and a non-vacuum family with zero charge. Intuitively, this can happen for $\l \neq 0$ because the Chern-Simons term acts like a source for the Maxwell field which can carry charge and angular momentum just like ordinary matter, so that the total charge or angular momentum may vanish.

\paragraph{Zero charge but not vacuum} This happens if $\X (\k, \l) = 0$, and only occurs in region III when $\l \leq \frac{1}{2\sqrt{7}}$. In this region, there are two such solutions for each $\l$, one goes to $\k = 0$ and the other goes to $\k = -1$ as $\l \to 0$. These two solutions degenerate at $\l = \frac{1}{2\sqrt{7}}$.

\paragraph{Zero angular momentum but not static} This occurs if $\Sigma (\k, \l) = 0$ and is possible in both region III and IV. In region III there are two solutions for each $\l < \l_{max}$, with $\l_{max} \approx 0.228937$\footnote{This is the smallest positive root of $135 \l^6 - 653 \l^4 + 549 \l^2 - 27 =0$ which is a factor in the discriminant of $\Sigma(\k, \l)$.}. Similar to the zero charge case, one solution goes to $\k = 0$ and the other goes to $\k = -1$ as $\l \to 0$. These two solutions degenerate at $\l = \l_{max}$. Another branch exists in region IV if $\l \geq \l_{min}$ with $\l_{min} = 1.94251$\footnote{This is the largest positive root of $135 \l^6 - 653 \l^4 + 549 \l^2 - 27 =0$.}. There are two solutions for each $\l_{min} < \l < 2$, until one of them exits the allowed region at $(\k, \l) = (-3,2)$. For $\l > 2$ only one such solution exists for each $\l$. 

\section{Sasakian structure}\label{sec:sas}

In this section we show that there is a Sasakian structure underlying the novel solutions derived in \S\ref{sec:solve} and presented in \S\ref{sec:EMsol}. A Sasakian manifold may be defined as a Riemannian manifold $(H,h)$ equipped with a Killing vector $K$ of unit length satisfying 
\begin{equation} \label{sasak}
\nabla_a\nabla_bK^c = \delta^c_aK_b - h_{ab}K^c.
\end{equation}
Equivalently, the metric cone $(\R_{>0}\times H,\text{d}r^2 + r^2h)$ over $(H,h)$ is Kähler\footnote{The Killing field $K$ is then given by (the pullback to $H$ of) the parallel complex structure applied to the homothetic vector field $r\partial_r$.}. See e.g. \cite{Sparks} for a review.
We show that a Sasakian structure is present for rotating solutions with constant non-zero co-rotating electric field (without any symmetry assumptions), and use this to prove that any such solution must admit two commuting Killing vectors. Together with the work in the previous sections, this completes the proof of Theorem \ref{result:all}. We also discuss the generalisation of these Sasakian solutions to higher (odd)-dimensional horizons, allowing for a cosmological constant.

Recall that the matter equation \eqref{eq:matfinal} with constant $\chi \neq 0$ implies that $\vert K \vert^2$ must also be constant. Then, hooking $K$ into the hodge dual of \eqref{eq:Q2}, we deduce
\begin{equation} \label{tau}
    \text{d}K = \frac{\tau}{\Gamma}\star K,
\end{equation}
where $\tau = \vert K \vert^{-2}(\beta +4c\chi + \frac{8}{3\sqrt{3}}\lambda \chi^3)$ is a constant (compare \eqref{tausym}). Let us for the moment assume $\tau \neq 0$. By changing the orientation if necessary, we may take $\tau > 0$. We define a new metric $h$ on $H$ by rescaling the metric on the orbit space of $K$ by $
\frac{\tau}{2\Gamma}$,
\begin{equation} \label{hsak}
    h = \frac{\tau}{2\Gamma}\left(\gamma -\frac{1}{\vert K \vert^2} K \otimes K\right) + \frac{1}{\vert K \vert^2}K \otimes K.
\end{equation}
Now $K$ is also a Killing vector of $(H,h)$ and the 1-form $h$-dual to $K$ agrees with the 1-form $\gamma$-dual to $K$ (still denoted $K$). Equation \eqref{tau} becomes
\begin{equation} 
    \text{d}K = 2\star_h K.
\end{equation}
Differentiating this relation, we find that $K$ satisfies \eqref{sasak} on $(H,h)$, i.e. the manifold $(H,h,\frac{K}{\vert K \vert})$ is Sasakian. It follows that $H$ is diffeomorphic to a quotient of $S^3, \text{Nil}^3$ or $\widetilde{SL_2}(\R)$ \cite{sasaktop}, and since $H$ must have positive Yamabe invariant \cite{Lucietti:2012sa} the topology of $H$ must be spherical. In particular, the Sasakian structure is, up to a quotient, a (in general non-trivial) second type deformation of a first type deformation of the standard Sasakian structure on $S^3$ induced by the Hopf fibration \cite{sasakclas}. Our analysis splits into two cases depending on whether $\Gamma$ is constant (\S~\!\ref{sechd}) or not (\S~\!\ref{sec2kvf}).

\subsection{Second Killing field} \label{sec2kvf}
If $\nabla \Gamma \not \equiv 0$, we can use the Sasakian structure to establish the existence of a second Killing field $L$ preserving the data $(K,\Gamma,B,\psi)$. This observation relies on the fact that $\nabla K$ induces a parallel complex structure $J$ on the orbit space of $K$, and a local computation using the gravity equation \eqref{nhe1} then implies that $J(\nabla \Gamma^{\frac 12})$ is Killing on the orbit space. Explicitly, we have the following.

\begin{prop} \label{prop2kvf}
    Let $(\gamma, \Gamma,K,B,\psi)$ be a rotating solution to the near-horizon equations \eqref{bd:nhe} such that $\chi = \Gamma\psi$ equals a non-zero constant. Then the vector $L$ defined by
    \begin{equation}
        L = \Gamma^{-\frac 12}\left[\iota_{\nabla \Gamma}(\star K) -2\tau K\right]
    \end{equation}
    is a Killing vector commuting with $K$ and preserving $(\Gamma,\psi,B)$.
\end{prop}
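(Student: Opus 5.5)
The strategy is to reduce the problem to the two-dimensional orbit space of $K$. There, a Killing field of the form $J\nabla f$ is governed by the trace-free Hessian of $f$. We then lift it back to $H$ using the Sasakian metric $h$ from \eqref{hsak}. Throughout, we use the consequences of constant $\chi\neq 0$ established above. First, $\vert K\vert^2$ is a non-zero constant. Second, $\text{d}K=\frac{\tau}{\Gamma}\star K$ by \eqref{tau}. Third, since $\text{d}\chi=0$, \eqref{eq:Bsol} reduces to $B=\frac{\hat c}{\Gamma\vert K\vert^2}\star K$ with $\hat c=c+\frac{2\lambda}{\sqrt3}\chi^2$. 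Locally we write $\gamma=\frac{1}{\vert K\vert^2}K\otimes K+g_\perp$, where $g_\perp$ descends to a metric on the (local) orbit space $\Sigma$. The functions $\Gamma,\psi$ are $K$-invariant and descend to $\Sigma$.

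\textbf{Step 1 (trace-free Hessian of $\Gamma^{1/2}$ vanishes).} Project \eqref{nhe1} onto the subbundle orthogonal to $K$ and take the trace-free part. The $K_aK_b$ term drops out. The $B_{ac}B_b{}^c$ term is proportional to $g_\perp$, because $B\propto\star K$ is the horizontal area form. The O'Neill correction relating the horizontal Ricci of $\gamma$ to $\text{Ric}(g_\perp)$ is built from $\text{d}K$. Since $K$ has constant length, that correction is also pure trace: $\text{d}K\propto\star K$ is the horizontal area form, and $A_XA_Y\propto g_\perp$ in two dimensions. Finally, $\text{Ric}(g_\perp)$ is pure trace in 2D. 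The Hessian terms combine as $\frac{1}{\Gamma}\nabla\nabla\Gamma-\frac{\text{d}\Gamma\,\text{d}\Gamma}{2\Gamma^2}=2\Gamma^{-1/2}\nabla\nabla\Gamma^{1/2}$. Because $K$ is Killing with constant norm and $\Gamma$ is $K$-invariant, the horizontal part of the $\gamma$-Hessian of $\Gamma^{1/2}$ is the $g_\perp$-Hessian on $\Sigma$. We conclude that the trace-free part of $\text{Hess}_{g_\perp}\Gamma^{1/2}$ vanishes. Equivalently, $J\nabla\Gamma^{1/2}$ is a conformal Killing field of $(\Sigma,g_\perp)$, where $J$ is the rotation by $\pi/2$ (induced by $\nabla K$). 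Note that $\iota_{\nabla\Gamma}\star K=\pm\vert K\vert\,J\nabla\Gamma$ is horizontal, which matches the horizontal part of $L$.

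\textbf{Step 2 (Killing on the rescaled base).} Conformal Killing fields in 2D are conformally invariant, so $V=\Gamma^{-1/2}J\nabla\Gamma\propto J\nabla\Gamma^{1/2}$ is conformal Killing for $\hat g=\frac{\tau}{2\Gamma}g_\perp$, the base metric of $h$. It is Hamiltonian for the $g_\perp$-area form and tangent to the level sets of $\Gamma$. The $\hat g$-area form differs from the $g_\perp$-area form by a function of $\Gamma$, so $V$ is also $\hat g$-divergence-free. A divergence-free conformal Killing field is Killing, so $V$ is Killing for $\hat g$, with Hamiltonian (Killing potential) a function of $\Gamma$.

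\textbf{Step 3 (lift and symmetries).} On a Sasakian manifold, the Killing fields that commute with the Reeb field and descend to Hamiltonian Killing fields of the transverse Kähler metric are exactly the horizontal lift plus a Reeb correction given by the Killing potential. Concretely, we require $\mathcal{L}_L(K^\flat)=0$, using $\text{d}K=2\star_hK$. This fixes the vertical part up to an additive constant times $K$, and that constant is irrelevant. We expect it to reproduce the coefficient $-2\tau\Gamma^{-1/2}$ in the definition of $L$; checking this normalisation is the main computational obstacle. Once $L$ is Killing for $h$ and satisfies $L(\Gamma)=0$ and $\mathcal{L}_LK^\flat=0$, it preserves $\gamma=\frac{2\Gamma}{\tau}(h-\frac{K\otimes K}{\vert K\vert^2})+\frac{K\otimes K}{\vert K\vert^2}$. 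It preserves $\psi=\chi/\Gamma$. It also preserves $B=\frac{\hat c}{\Gamma\vert K\vert^2}\star K$, because it preserves $\gamma$, the orientation, $K$ and $\Gamma$. Finally, $[K,L]=0$ because every ingredient of $L$ ($\Gamma$, $\star K$, $K$) is $K$-invariant. The case $\tau=0$ is not covered by this argument; in that case we would instead work directly with the product structure $\text{d}K=0$, which gives an analogous but simpler computation.
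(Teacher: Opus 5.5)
Your argument is correct, and it takes a genuinely different route from the paper.

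\textbf{How the paper argues.} The paper never passes to the orbit space. It computes $\nabla_{(a}L_{b)}$ directly in three dimensions. It evaluates the Hessian term by contracting the simplified gravity equation \eqref{gravsimp} with $K_c\epsilon_{(b}{}^{cd}$, which kills the $K_aK_b$ and $\gamma_{ab}$ pieces. It rewrites the remaining Ricci term via the Ricci identity for $K$ together with \eqref{tau}. All terms then cancel, including those from differentiating $\Gamma^{-1/2}$ in both parts of $L$.

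\textbf{How your route differs.} Yours is the structural version of the remark the paper makes just before the proposition. The trace-free horizontal part of \eqref{nhe1} forces $\mathrm{Hess}^{\mathrm{TF}}_{g_\perp}\Gamma^{1/2}=0$ on the local quotient. This works because, once $\vert K\vert$ is constant, three terms are pure trace: the Ricci tensor of a surface, the O'Neill $A$-term, and $B_{ac}B_b{}^c$. Then $J\nabla\Gamma^{1/2}$ is Killing downstairs, and $L$ is its horizontal lift plus a Reeb correction.

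Your approach explains where the factor $\Gamma^{-1/2}$ and the coefficient $-2\tau$ come from. The cost is O'Neill's formulas and a local quotient; the quotient is harmless since $K$ is nowhere vanishing. The paper's calculation is self-contained and uniform in $\tau$.

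\textbf{The normalisation closes.} The normalisation you leave as ``expected'' takes two lines. The first term of $L$ is orthogonal to $K$, and $\iota_K\star K=0$. Hence
\begin{equation*}
\iota_L K=-2\tau\vert K\vert^2\Gamma^{-1/2},\qquad \iota_L\text{d}K=\frac{\tau}{\Gamma}\iota_L\star K=-\tau\vert K\vert^2\Gamma^{-3/2}\text{d}\Gamma .
\end{equation*}
Therefore $\mathcal{L}_LK=\iota_L\text{d}K+\text{d}\iota_LK=0$ holds exactly with the coefficient $-2\tau$. This uses the convention implicit in the paper's proof, $\nabla_aK_b=\frac{\tau}{2\Gamma}\epsilon_{abc}K^c$. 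So your construction reproduces the paper's $L$ exactly.

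\textbf{The detour through $h$ is unnecessary.} Neither the rescaled base metric $\hat g=\frac{\tau}{2\Gamma}g_\perp$ nor the Sasakian metric $h$ is needed. The separate case $\tau=0$ is not needed either.
\begin{itemize}
\item $J\nabla\Gamma^{1/2}$ is Hamiltonian, hence divergence-free, already for $g_\perp$.
\item Step 1 therefore shows it is Killing for $g_\perp$ itself: it is a divergence-free conformal Killing field in two dimensions.
\item Since $[K,L]=0$ and $\mathcal{L}_LK^\flat=0$, the decomposition $\gamma=\vert K\vert^{-2}K\otimes K+g_\perp$ gives $\mathcal{L}_L\gamma=0$ directly.
\end{itemize}
This holds for every value of $\tau$, including $\tau=0$.
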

\begin{proof}
    Observe that, in abstract index notation, 
    \begin{equation*}
        L_b = \Gamma^{-\frac 12}(\epsilon_{bcd}K^c\nabla^d\Gamma - 2\tau K_b).
    \end{equation*}
    We calculate
    \begin{align*}
        \nabla_{(a}L_{b)} &= -\frac{1}{2\Gamma}L_{(a}\nabla_{b)}\Gamma + \Gamma^{-\frac 12}\epsilon_{(b}^{\:\:\:\:cd}\nabla_{a)}(K_c\nabla_d\Gamma) \\[3pt]
        &= -\frac{1}{2\Gamma}L_{(a}\nabla_{b)}\Gamma + \frac{\tau}{2}\Gamma^{-\frac 32}\epsilon_{(b}^{\:\:\:\:cd}\epsilon_{a)ce}K^e\nabla_d\Gamma + \Gamma^{-\frac 12}K_c\epsilon_{(b}^{\:\:\:\:cd}\nabla_{a)}\nabla_d\Gamma \\[3pt]
        &= -\frac{1}{2\Gamma}L_{(a}\nabla_{b)}\Gamma - \frac{\tau}{2}\Gamma^{-\frac 32}K_{(a}\nabla_{b)}\Gamma +\Gamma^{-\frac 12}K_c\epsilon_{(b}^{\:\:\:\:cd}\nabla_{a)}\nabla_d\Gamma.
    \end{align*}
    To compute the final term, we make use of the gravity equation \eqref{nhe1}, which, after inserting the formula \eqref{eq:Bsol} for $B$, becomes
    \begin{equation} \label{gravsimp}
        \nabla_a\nabla_b\Gamma = \Gamma R_{ab} + \frac{1}{2\Gamma}(\nabla_a\Gamma)(\nabla_b\Gamma) + \frac{1}{\Gamma}K_aK_b(2\sigma^2 - \tfrac 12) - \frac{2}{3\Gamma}\gamma_{ab}(\chi^2 + 2\vert K \vert^2\sigma^2).
    \end{equation}
    Here $\sigma = \vert K \vert^{-2}(c + \frac{2\lambda}{\sqrt{3}}\chi^2)$. The terms proportional to $K_aK_b$ and $\gamma_{ab}$ do not contribute, so that 
    \begin{equation*}
        K_c\epsilon_{(b}^{\:\:\:\:cd}\nabla_{a)}\nabla_d\Gamma = \frac{1}{2\Gamma}K_c\epsilon_{(b}^{\:\:\:\:cd}\nabla_{a)}\Gamma\nabla_d\Gamma + \Gamma K_c\epsilon_{(b}^{\:\:\:\:cd}R_{a)d}.
    \end{equation*}
    Using the Ricci identity and \eqref{tau}, we can write the term involving the Ricci tensor as
    \begin{align*}
        K_c\epsilon_{(b}^{\:\:\:\:cd}R_{a)d} &= \epsilon_{(b}^{\:\:\:\:cd}\nabla_{\vert d\vert}\nabla_{a)}K_c - \epsilon_{(b}^{\:\:\:\:cd}\nabla_{a)}\nabla_d K_c \\[3pt]
        &= -\frac{\tau}{2}\nabla_{(a}\left(\Gamma^{-1}K_{b)}\right) + \tau\nabla_{(a}\left(\Gamma^{-1}K_{b)}\right) = -\frac{\tau}{2\Gamma^2}K_{(a}\nabla_{b)}\Gamma.
    \end{align*}
    Putting everything together,
    \begin{equation*}
        \nabla_{(a}L_{b)} = -\frac{1}{2\Gamma}L_{(a}\nabla_{b)}\Gamma - \frac{\tau}{2\Gamma^{\frac 32}}K_{(a}\nabla_{b)}\Gamma + \frac{1}{2\Gamma^{\frac 32}}K_c\epsilon_{(b}^{\:\:\:\:cd}\nabla_{a)}\Gamma\nabla_d\Gamma - \frac{\tau}{2\Gamma^{\frac 32}}K_{(a}\nabla_{b)}\Gamma = 0.
    \end{equation*}
    It is clear from the definition of $L$ that $\mathcal{L}_K L = 0$ and $L^a\nabla_a\Gamma = 0$. Since $\chi$ is constant and $B$ is determined by \eqref{eq:Bsol}, $L$ also leaves the matter data invariant.
\end{proof}
Note that $L$ is globally defined and not proportional to $K$ unless $\Gamma$ is constant. Indeed, since $\iota_{\nabla \Gamma} K = 0$ we can only have $\iota_{\nabla \Gamma} \star K = 0$ if $\nabla \Gamma \equiv 0$. Hence the near-horizon geometry must be invariant under a $U(1) \times U(1)$ action. If $\tau = 0$ then \eqref{tau} implies that $K$ is parallel, so $H$ cannot have spherical topology. Proposition \ref{prop2kvf} is however still valid in this case. As explained in \S\ref{sec:guess}, there are no corresponding globally regular solutions with $\Gamma$ not constant.

\subsection{Higher-dimensional Sasakian horizons} \label{sechd}

If $\chi$ and $\Gamma$ are both constant, Proposition \ref{prop2kvf} does not guarantee a second Killing field. We will show that (for $\tau \neq 0$) in this case the solution is obtained from a Sasaki-Einstein manifold. In particular, in three dimensions it must be a homogeneous three-sphere, which implies it coincides with the known equal angular momenta solutions in \cite{Kunduri:2013gce} and is contained (and explicitly parametrised) in \S \ref{sec:EMsol}. The analysis extends to Einstein-Maxwell-(Chern-Simons) theory in any odd number of dimensions and even allows for the presence of a cosmological constant $\Lambda$. We will keep the discussion general and consider the $(2n+3)$-dimensional spacetime action
\begin{equation}
    S = \int \text{d}x^{2n+3}\sqrt{-g}\:(R - 2\Lambda - F_{\mu\nu} F^{\mu\nu}) - \frac{16\lambda}{(n+1)(n+2)\sqrt{3}}\int F^{\wedge (n+1)}\wedge A.
\end{equation}
Following the steps in \S \ref{ehorsec}, the generalisation of \eqref{bd:nhe} becomes \cite{Colling:2025dub}
\begin{subequations}\label{bd:nhe2}
\begin{align}
    R_{ab} &= \frac{1}{2\Gamma^2}K_aK_b - \frac{(\nabla_a\Gamma)(\nabla_b\Gamma)}{2\Gamma^2} + \frac{1}{\Gamma}\nabla_a\nabla_b \Gamma + 2B_{ac}B_b^{\:\:c} + \tfrac{1}{2n+1}\gamma_{ab}(2\Lambda + 2\psi^2 - B_{cd}B^{cd}), \label{nhe4} \\[3pt]
    &\nabla^a(\Gamma B_{ab}) = K_b\psi + \frac{4\lambda}{\sqrt{3}}\Gamma\psi [\star (B^{\wedge n})]_b,\label{nhe5} \\
    &K^aB_{ab} = \nabla_b(\psi\Gamma). \label{nhe6}
\end{align}
\end{subequations}
Here $R_{ab}$ is the Ricci tensor of the metric $\gamma$ defined on a compact manifold $H$ of dimension $2n+1$, equipped with a Killing vector $K$ preserving a closed 2-form $B$ and functions $\Gamma,\psi$. The generalisation of the constant $\alpha$ in \eqref{alpha} is
\begin{equation}
    \alpha = \frac{\vert K \vert^2}{\Gamma} - \Gamma G = \frac{\vert K \vert^2}{2\Gamma} - \frac 12 \Delta \Gamma - \frac{2}{2n+1}\Gamma\Lambda + \frac{4n}{2n+1}\Gamma\psi^2 + \frac{1}{2n+1}\Gamma \vert B \vert^2. 
\end{equation}
For our Sasakian Ansatz we assume that $\Gamma, \psi$ are constant and that there exists a constant $b$ such that $B = \frac 12b \: \text{d}K.$ Moreover, we assume there is a positive constant $a$ such that $(H, h, \frac{K}{\vert K \vert})$ is Sasakian, where
\begin{equation} \label{hsasak}
    h = a\left(\gamma - \frac{1}{\vert K \vert^2}K \otimes K\right) + \frac{1}{\vert K \vert^2} K \otimes K.
\end{equation}
Note that, by the arguments in the previous section, any solution with $n = 1, \Lambda = 0$ and $\Gamma, \chi, \tau \neq 0$ constant is of this form. Equation \eqref{nhe6} is equivalent to $\vert K \vert^2$ being constant. We use the freedom in simultaneously rescaling $\Gamma$ and $K$ to set $\vert K \vert^2 = 1$. Equation \eqref{nhe5} can be simplified by calculating the divergence of $B$ using the expression for $\Delta K$ coming from the Sasakian condition~\eqref{sasak}. To deal with the right hand side, observe that the volume form of $\gamma$ is
\begin{equation*}
    \epsilon_\gamma = \frac{1}{a^n}\epsilon_h = \frac{1}{n!(2a)^n}K \wedge (\text{d}K)^{\wedge n} = \frac{1}{n!(ab)^n}K\wedge B^{\wedge n}.
\end{equation*}
Therefore $\star (B^{\wedge n}) = n!(ab)^nK$. We find that all terms in \eqref{nhe5} are proportional to $K$, so that the equation reduces to a relation between constants
\begin{equation} \label{consteq1}
    -2na^2b\Gamma  = \psi + \frac{4\lambda n!(ab)^n}{\sqrt{3}}\psi\Gamma.
\end{equation}
For the gravity equation \eqref{nhe4}, observe that the terms involving $\nabla \Gamma$ vanish, and 
\begin{equation*}
    B_{ac}B_b^{\:\:c} = ab^2h^{cd}\nabla_aK_c\nabla_bK_d = ab^2(h_{ab} - K_aK_b) = a^2b^2(\gamma_{ab} - K_aK_b).
\end{equation*}
Hence \eqref{nhe4} reduces to
\begin{equation} \label{gravred}
    R_{ab} = \left(\frac{1}{2\Gamma^2} - 2a^2b^2\right) K_aK_b + \frac{2}{2n+1}\gamma_{ab}\left((n+1)a^2b^2 + \Lambda +\psi^2\right).
\end{equation}
Contracting twice with $K$ and using $R_{ab}K^aK^b= \vert \nabla K \vert^2 = 2na^2$, we find a second relation between the constants $(a,b,\Gamma,\psi)$,
\begin{equation} \label{sum2n}
    2na^2 \left(1 + \frac{b^2}{2n+1}\right) = \frac{1}{2\Gamma^2} + \frac{2}{2n+1}(\Lambda + \psi^2).
\end{equation}
A computation using \eqref{sum2n} and \eqref{hsasak} shows that, in terms of $h$ and Ric$(h)$,  equation \eqref{gravred} takes the form
\begin{equation} \label{Ksasak}
    \text{Ric}(h) = \n K \otimes K + \xi h,
\end{equation}
with
\begin{equation}
    \xi + \nu = 2n, \hspace{.8cm} \xi = -2+2a + \frac{2}{a(2n+1)}\left((n+1)a^2b^2 + \Lambda + \psi^2\right).
\end{equation}
A Sasakian manifold sasifying \eqref{Ksasak} is called $K$-Einstein \cite{Boyer}. We will assume that the Sasakian structure is positive in the language of \cite{Boyer}, so that $\xi > -2$ (note this is always the case if $\Lambda \geq 0$). Then the transverse homothetic transformation
\begin{equation*}
    (h,K) \mapsto (\tilde{h}, \tilde{K}) = (\kappa h + \kappa(\kappa-1)K \otimes K, \:\kappa^{-1}K)
\end{equation*}
with $\kappa = \frac{\xi + 2}{2(n+1)} > 0$ ensures that $(\tilde{h},\tilde{K})$ is Sasaki-Einstein (see \cite[Prop. 18]{Boyer}).  To summarise, we have reduced the equations \eqref{bd:nhe2} to the relations (\ref{consteq1}, \ref{sum2n}) between constants $(a,b,\Gamma,\psi)$, together with the Sasaki-Einstein condition for $(\tilde{h},\tilde{K})$. In particular, by reversing the steps above we can construct a two-parameter family of charged and rotating extremal horizons starting from any compact Sasaki-Einstein manifold $(H,\tilde{h},\tilde{K})$.

Restricting to $n = 1$ and  $\Lambda = 0$, we are now in a position to deduce our main Theorem \ref{result:all}.

\begin{proof}[Proof of Theorem \ref{result:all}] It suffices to prove that any rotating solution to the near-horizon equations \eqref{bd:nhe} on a compact 3-manifold $H$ with constant and non-zero $\chi$ is invariant under a $U(1) \times U(1)$ action, as the equations are solved completely under these assumptions in \S \ref{sec:solve} with solutions presented in \S \ref{sec:EMsol}. If $\Gamma$ is not constant this follows immediately from the rigidity theorem in \cite{Colling:2025dub} together with Proposition~\ref{prop2kvf}. For constant $\Gamma$ there are two cases to consider depending on whether the constant $\tau$ in \eqref{tau} vanishes.

If $\tau = 0$ then $K$ is parallel and the gravity equation \eqref{gravsimp} shows that the orbit space of $K$ has constant positive curvature. Hence $(H,\gamma)$ is isometric to a quotient of $(\R \times S^2, \text{d}t^2 + \gamma_{S^2})$ with $\gamma_{S^2}$ a constant curvature metric on $S^2$ and $B$ a constant multiple of the volume form of $\gamma_{S^2}$. In particular, the solution admits two commuting Killing vectors and is (locally) isometric to the homogeneous ring solution in \S\ref{sec:EMsol}.

For $\tau \neq 0$ the solution is obtained from a Sasaki-Einstein manifold as above. Since $(H,\tilde{h})$ is positive Einstein, it must be (up to a quotient) isometric to a round $S^3$. We can view this $S^3$ as the unit sphere in $\C^2$ with $\tilde{K}$ generating the isometries $(z_1,z_2) \mapsto (e^{i\phi}z_1,e^{i\phi}z_2)$ (see \S \ref{secex}). The subgroup of $SO(4)$ commuting with these isometries is $U(2)$. The procedure to reconstruct the horizon metric $\gamma$ and the magnetic field $B$ from $\tilde{h}$ only involves $\tilde{K}$, so the $U(2)$ action also preserves $(H,\gamma, B)$. In particular, the near-horizon geometry is homogeneous and contained in the classification in \cite{Kunduri:2013gce} as well as in \S\ref{sec:EMsol}.
\end{proof}

For $n > 1$ the two-parameter family of near-horizon geometries with constant $\Gamma$ need not be homogeneous -- for example, one may take $(H, \tilde{h})$ to be given by one of the explicit non-homogeneous Sasaki-Einstein metrics on $S^2 \times S^3$ constructed in \cite{Gauntlett:2004yd}. In the special case where $\psi = B = 0$ the solutions reduce to vacuum Sasakian horizons studied in \cite{Kunduri:2012uq}.

There is no straightforward higher-dimensional generalisation of the Sasakian horizons with $n = 1$ and $\lambda = \Lambda = 0$ in this work for which $\Gamma$ is not constant. If we impose that the metric \eqref{hsak} (with $\tau$ and arbitrary constant) is Sasakian\footnote{The computations below actually only require $(H,h, K)$ to be a $K$-contact manifold, i.e. a contact metric manifold for which the Reeb vector field $K$ is Killing.} with constant $\chi$ and $B$ a constant times $\text{d}K$, then the matter equation \eqref{nhe5} and the gravity equation \eqref{nhe6} contracted once with $K$ reduce to 
\begin{equation} \label{ct}
    \star_\gamma \text{d}\star_{\gamma}(\Gamma \text{d}K) = \frac{n\tau^2}{2\Gamma}K,
\end{equation}
together with relations between constants. The Sasakian condition imposes that $\omega = \frac 12\text{d}K$ is a K\"ahler form on the orbit space of $K$ with K\"ahler metric $\hat{h}$ induced by $h$. On this orbit space \eqref{ct} implies
\begin{equation*}
    \text{d}\star_{\hat{h}}(\Gamma^{n-1}\omega) = 0.
\end{equation*}
Since $\omega$ is co-closed and non-degenerate, it follows that either $n = 1$ (corresponding to the solutions in this work) or $\Gamma$ is constant.

\subsection{Example: charged spherical horizons} \label{secex}
As an example, we set $\lambda = \Lambda = 0$ and construct charged and rotating extremal horizons in any odd dimension from the standard Sasakian structure on $S^{2n+1}$. We view $S^{2n+1}$ as the unit sphere in $\C^{n+1}$, from which it inherits the round metric $\tilde{h}$. The Sasakian vector field $\tilde{K}$ generates isometries $(z_1,\dots,z_{n+1}) \mapsto (e^{i\phi}z_1,\dots,e^{i\phi}z_{n+1})$, and the orbit space of $\tilde{K}$ is $\C\mathbb{P}^n$ equipped with the Fubini-Study metric $g_{\mathbb{CP}^n}$. We can write the metric in local coordinates where $\tilde{K} = \partial_\phi$ as 
\begin{equation}
    \tilde{h} = \tilde{K} \otimes \tilde{K} + g_{\mathbb{CP}^n}, \hspace{.8cm} \tilde{K} = \text{d}\phi +2\sigma, 
\end{equation}
where $\text{d}\sigma = \omega_{\mathbb{CP}^n}$ is the K\"ahler form.
To reconstruct the horizon metric $\gamma$, we solve the constraints (\ref{consteq1}, \ref{sum2n}) for $a$ and $b$ in terms of $\psi$ and $\Gamma$. As in the case $n =1$, the constraints factorise and there are two branches:
\begin{align*}
    \text{branch I:}& \hspace{1cm} a = \frac{\psi}{\sqrt{n(2n+1)}}, \hspace{.4cm} b = -\frac{2n+1}{2\psi\Gamma}, \\
    \text{branch II:}& \hspace{1cm} a= \frac{1}{2\Gamma\sqrt{n}}, \hspace{.4cm} b = -2\psi\Gamma.
\end{align*}
The corresponding constants $\kappa$ involved in the transverse homothetic transformation are
\begin{equation*}
    \kappa_I = \frac{4\Gamma^2\psi^2+2n+1}{4\psi\Gamma^2\sqrt{n(2n+1)}}, \hspace{.8cm} \kappa_{II} = \frac{4\psi^2\Gamma^2+1}{2\Gamma\sqrt{n}(n+1)}.
\end{equation*}
In terms of $(\psi,\Gamma)$ the remaining horizon data $(\gamma, K, B)$ reads
\begin{equation}
    \gamma = \frac{1}{a\kappa}g_{\mathbb{CP}^n} + \kappa^{-2}(\text{d}\phi + 2\sigma)^2, \hspace{.8cm} K = \kappa \frac{\partial}{\partial \phi}, \hspace{.8cm} B = b\kappa^{-1}  \omega_{\mathbb{CP}^n}.
\end{equation}
Branch I has a static limit for $\Gamma \to \infty$, which corresponds to the extremal Reissner-Nordstr\"om-Tangherlini horizon. Branch II has a vacuum limit $\psi \to 0$ which is the horizon geometry of the extremal Myers-Perry black hole with all angular momenta set equal (see \cite{MPNHG}). There are also limits of branch I as $\psi \to 0$ and of branch II as $\Gamma \to \infty$ which can be globally defined on $\mathbb{CP}^n \times S^1$.

The entropy $S$ and charges $J_\phi = J[\partial_\phi]$ and $Q$ can be computed using the formulae (\ref{Qhor}, \ref{Jhor}), with the integral now taken over the $(2n+1)$-dimensional cross-section $H$. The result is 
\begin{equation*}
    S = \frac{\pi^{n+1}}{2a^n\kappa^{n+1} n!}, \hspace{.8cm} Q = -\frac{\pi^{n}\psi}{4a^n\kappa^{n+1}n!}, \hspace{.8cm} J_\phi = -\frac{(1 - 2b\Gamma\psi)\pi^{n}}{8a^n \kappa^{n+2} \Gamma n!}.
\end{equation*}
For branch II it is straightforward to verify that
\begin{equation}
    S = \frac{2\pi}{(n+1)\sqrt{n}}\vert J_\phi\vert.
\end{equation}
Setting $n = 1$, we recover the relation \eqref{eq:Ent1} with $\vert J_1\vert = \vert J_2\vert = \frac{1}{n+1}\vert J_\phi\vert $. The solution satisfies $\vert F \vert_g^2 = 0$ for all $n$ and hence can be lifted to a vacuum near-horizon geometry with horizon topology $S^{2n+1} \times S^1$. For branch I the relation between $S, J_\phi$ and $Q$ is more complicated and given by\footnote{This formula was obtained using ChatGPT 5.4 and verified analytically by the authors.}
\be
 S = \frac{\pi |J_\phi|}{(n+1)\sqrt{n}} \left(\sqrt{z} + \frac{1}{\sqrt{z}} \right), 
\ee
where $z$ is the unique positive solution to
\be
z(1+z)^{n-1} = \frac{4^{n+1} (n+1)^{2n} n!}{\pi^n \sqrt{n}(2n+1)^{n+1/2}} \frac{|Q|^{2n+1}}{|J_\phi|^{2n}}
\ee
For $n = 1$ this reproduces the entropy relation \eqref{eq:entropy1} with equal angular momenta.

\section{Conclusion}\label{sec:concl}
In this work we have presented new charged and rotating near-horizon geometries in five dimensions with two independent angular momenta. At vanishing Chern-Simons coupling, we found two three-parameter families of solutions extending the known homogeneous solutions. A surprise is that they satisfy the same entropy relations expected for extremal charged Myers-Perry black holes \cite{Horowitz:2024kcx}, although they do not reduce to the extremal Myers-Perry horizon in the vacuum limit. We expect that our family 1 (\S\ref{sec:fam1}) coincides with the numerical solutions in \cite{Horowitz:2024kcx}, whereas our family 2 (\S\ref{sec:fam2}) might be a different slice of a more general family of solutions that contains the charged Myers-Perry horizon.

We also found generalisations of these geometries to non-zero Chern-Simons coupling. Our solutions come in five families, two of which have entropy relations that agree qualitatively with the numerics in \cite{Horowitz:2024kcx}. The perturbative expansions of the entropy relations in charge and spin also match, even though our vacuum solutions (at zero charge) are not Myers-Perry. This again suggests that there might be a more general family of solutions for general values of the Chern-Simons coupling. For the specific coupling corresponding to the bosonic part of minimal supergravity, our solutions share the same entropy relation with the extremal CCLP black holes \cite{Chong:2005hr}, but the horizons are not the same except when the angular momenta are equal.  

Our solutions contain all possible rotating near-horizon geometries for which the co-rotating electric field is a non-zero constant. The proof of this fact exploits a Sasakian structure underlying our solutions to establish the existence of a $U(1)^2$ isometric action preserving the Maxwell field. Any other solution with $U(1)^2$ symmetry (including the charged Myers-Perry horizon) can be obtained by solving the system of ODEs \eqref{bd:es} subject to the constraint \eqref{eq:cons}. We have not found any further new solutions to this system analytically, although we did establish an upper bound for the number of parameters in a smooth solution (five for a spherical horizon). We expect the derived system to be useful for future attempts at constructing more solutions.

In higher (odd) dimensions the Sasakian observation can be used to generate a two-parameter family of charged and rotating solutions starting from any Sasaki-Einstein manifold. This generalises the vacuum construction in \cite{Kunduri:2012uq}. In spacetime dimensions higher than five our Sasakian Ansatz forces $\Gamma$ to be constant, but the resulting horizons can still be non-homogeneous. It would be interesting to investigate whether a similar Sasakian construction could be carried out for near-horizon geometries in other theories such as the bosonic sector of 11D supergravity or 10D supergravities \cite{Gauntlett:2006ns,Kim:2006qu,
Couzens:2018wnk,Gauntlett:2019pqg}.

An obvious question is whether our horizons can be embedded into any known or yet to be constructed extremal black holes. In pure Einstein-Maxwell theory, it is possible that extremal charged Myers Perry horizons come in two branches (as is the case in minimal supergravity) and one of them coincides with our family 1. This family could also correspond to a rotating version of the black holes in a background electric field constructed in \cite{KL}. For family 2, a natural candidate would be a charged version of the Rasheed solutions \cite{KK}. This is of course all speculative, as none of these black hole solutions are known due to a lack of solution generating techniques in five-dimensional Einstein-Maxwell theory (see e.g. approaches in \cite{Frolov:2017kze, Ortaggio:2023rzp}). The corresponding full black holes may not exist in the full parameter range \cite{Blazquez-Salcedo:2025cpu}. Even if they do exist, they are likely to suffer from the same type of singularities as in \cite{Horowitz:2024kcx,Herdeiro:2025blx}. Within minimal supergravity, one might expect our solutions to be embeddable into the known charged Kaluza-Klein black holes \cite{six} with six independent charges. We have not been able to identify any of our non-homogeneous solutions with horizons of the slightly less general family in \cite{kk54par}.

A first step towards addressing the existence of extremal black holes corresponding to these novel geometries is to compute the scaling dimensions  \cite{Horowitz:2024kcx,Horowitz:2022mly, Horowitz:2023xyl,Horowitz:2024dch} (see also \cite{Li, maciej}). The sign of the scaling dimensions can be used to argue whether the near horizon geometry arises from an extremal black hole. Moreover, whether the scaling dimensions are positive integers can be used to diagnose singularities on the horizon. Scaling dimensions may also help to distinguish our solutions from those numerically found in \cite{Horowitz:2024kcx}. Branches of our solutions with vacuum limits should be different from those in \cite{Horowitz:2024kcx}, despite them sharing the same entropy relation. They are likely to have different spectra of scaling dimensions. However, we expect family 1 of our solutions to agree with those in \cite{Horowitz:2024kcx}, and hence also their scaling dimensions. 

\section*{Acknowledgements}
It is a pleasure to thank Maciej Dunajski, Sean A. Hartnoll, Gary T. Horowitz, James Lucietti and Jorge E. Santos for guidance and discussions throughout this project. We are also grateful to Bernardo Araneda, Cameron Gibson, Khoi Le Nguyen Nguyen and Matt Smith for very helpful discussions. This work has been partially supported by STFC consolidated grant ST/X000664/1. AC is supported by a Cambridge Trust International Scholarship. The work of AC  was partially supported by the Simons Foundation grant (award no. SFI-MPS-T-Institutes-00010825) and from State Treasury funds as part of a task commissioned by the Minister of Science and Higher Education under the project Organization of the Simons Semesters at the Banach Center - New Energies in 2026-2028 (agreement no. MNiSW/2025/DAP/491).  JL is supported by a Harding Distinguished Postgraduate Scholarship. 

\appendix

\section{Known solutions} \label{Aknown}

In this appendix we review some analytically known spherical near-horizon geometries in the theory \eqref{eq:action} relevant for our work. We will focus on $\l = 0,1$ and present the solutions in the general $U(1) \times U(1)$ symmetric Ansatz \eqref{bd:glob}. For $\l \neq 0,1$ we additionally recover known homogeneous near-horizon geometries \cite{Kunduri:2013gce} in our solutions \eqref{eq:lsol} (for $c_1 = c_2$), which we do not write explicitly here.

\subsection{Pure Einstein-Maxwell}
The known solutions in this case consist of a two parameter family of charged static solutions \cite{KL}, a three-parameter family of vacuum solutions \cite{Hollands:2010bf, KLvac}, and two two-parameter families of charged and rotating solutions with equal angular momenta \cite{Kunduri:2013gce, homog2}. The horizon cross-sections of the latter solutions have an enhanced isometry group $U(2)$ and hence are homogeneous.

\paragraph{Static} The static solutions are given by
\begin{subequations}\label{eq:static}
\begin{gather}
    \psi(y) = \frac{\sqrt{3 c_1 c_2}}{\Gamma(y)}, \quad B_i(y) = 0,\\
    \Gamma(y) = \left(c_1 y + c_2 (1-y)\right)^2, \quad \omega_i = 0,\\
    f_{11}(y) = \frac{c_1^3(1-y)}{(c_1-c_2)y+c_2}, \quad f_{12}(y) = 0, \quad f_{22}(y) = \frac{c_2^3y}{(c_1-c_2)y+c_2}.
\end{gather}
\end{subequations}
Here $c_1$ and $c_2$ are arbitrary positive constants.

\paragraph{Vacuum} The vacuum solutions have the general form
\begin{subequations}
\begin{gather}\label{eq:vac}
     \psi(y)  = 0, \quad B_i(y) = 0,\\
    \Gamma(y) = \frac{b(a + b)^3 (1-y)^2 + c^4 y(1-y) + a(a + b)^3 y^2}{c^2}, \quad\omega_1 = \frac{4}{\omega_2} = 2 \sqrt{\frac{b}{a}}, \\
    f_{11}(y) = \frac{a (a + b)^3\left(a (a + b)^3 y + c^4 (1-y)\right)}{16 c^4\, \Gamma(y)} (1-y), \\
    f_{12}(y) = \frac{a b (a + b)^6}{16 c^4\, \Gamma(y)} y (1-y), \\
    f_{22}(y) = \frac{b (a + b)^3 \left(b (a + b)^3 (1 - y) + c^4 y \right)}{16 c^4 \, \Gamma(y)} y,
\end{gather}
\end{subequations}
where $a, b, c$ are real parameters. There are two special cases relevant to us in this family of solutions. When $ (a + b)^2 = c^2$, we have the near horizon limit of the extremal Myers-Perry solution. In this case $\Gamma(y)$ is a linear function of $y$. When $ 4 a b (a + b)^6 = c^8$, the function $\Gamma(y)$ becomes a perfect square. This is the vacuum limit of a branch of our novel solutions, see \S \ref{sec:KK}. These two two-parameter subfamilies of vacuum solutions intersect at a one-parameter family with $a = b$, which agrees with the vacuum limit of a branch of homogeneous solutions (see below). The whole family shares a simple entropy relation
\be
S = 2 \pi \sqrt{|J_1 J_2|}. 
\ee

\paragraph{Homogeneous with static limit} One of the two families of homogeneous solutions include the direct product $AdS_2 \times S^3$ solution as a limit. This family reads
\begin{subequations}\label{eq:homo1}
\begin{gather}
    \Gamma(y) = c^2, \quad \psi(y) = \frac{\sqrt{3 (c^2 - l^2)}}{\Gamma(y)},\\
    \omega_1 = \omega_2 = \frac{2 l}{\sqrt{c^2 - l^2}}, \quad B_1(y) =- B_2(y) = \sqrt{3}\, l\\
    f_{11}(y) = \left[c^2 - l^2(1-y)\right] (1-y), \quad f_{12}(y) = - l^2 y (1-y), \quad f_{22}(y) = (c^2 - l^2 y)\, y,
\end{gather}
\end{subequations}
where $l$ and $c$ are real parameters with $|l| < |c|$. The $AdS_2 \times S^3$ solution is recovered by setting $l=0$ here, which corresponds to setting $c = c_1 = c_2$ in \eqref{eq:static}. 

\paragraph{Homogeneous with vacuum limit} The other family of homogeneous solutions instead has a vacuum limit recovering the vacuum solution presented above with $a = b$. These solutions have the form
\begin{subequations}
\begin{gather}
    \psi(y) = \frac{q \sqrt{3}}{\sqrt{2 (2 a^2 + q) h(a,q)}}, \quad B_1(y) = -B_2(y) = \frac{q\sqrt{3 (2 a^2 + q)}}{\sqrt{8 h(a,q)}}, \\
    \Gamma(y) = 2a^2 + q, \quad \omega_1 = \omega_2 = \frac{(2 a^2 + q)^3}{h(a,q)},\\
    f_{11}(y) = \frac{(2 a^2 + q)\left[(4  a^2 + 3 q)^2 - y  (8  a^4 + 12  a^2  q + 3  q^2)\right]}{2 h(a, q)}(1-y),\\
    f_{12}(y) = \frac{(2 a^2 + q)(8  a^4 + 12  a^2  q + 3  q^2)}{2 h(a,q)} y(1-y),\\
    f_{22}(y) =\frac{(2 a^2 + q)\left[(4  a^2 + 3 q)^2 - (1-y)  (8  a^4 + 12  a^2  q + 3  q^2)\right]}{2 h(a, q)}y, 
\end{gather}
\end{subequations}
where we have defined $h(a,q) = 4 a^4 + 6 a^2 q + 3 q^2$. Setting $q =0$ recovers the homogeneous vacuum solution. 

\subsection{Minimal supergravity}\label{app:knownsugra}
Within minimal supergravity, there is a charged version of the Myers-Perry black hole \cite{Chong:2005hr}. The extremal solutions come in two branches. Adapting the presentation in \cite{Horowitz:2024kcx} to our Ansatz, they are given by
\begin{subequations}
\begin{gather}
\psi(y) = - \frac{q \sqrt{3 (a b + q)\hat{\delta}}}{\Gamma(y)^2}, \quad B_1(y) =  \frac{a q ( a (a + b \hat{\delta}) + q\hat{\delta}) \sqrt{3}}{2\Gamma(y)^2}, \quad B_2(y) = -\frac{\hat{\delta} b q ( b (a + b \hat{\delta}) + q) \sqrt{3}}{2\Gamma(y)^2},\\
\Gamma(y) = (a^2 - b^2) y + b (a \hat{\delta} + b) + q \hat{\delta}, \\
\omega_1 = -\frac{2 \hat{\delta}\left(b(a+ \hat{\delta} b)^2+q(2b \hat{\delta}+a)\right)}{\left((a \hat{\delta} +b)^2+ q \hat{\delta}\right)\sqrt{(ab+q)\hat{\delta}}}, \quad \omega_2 = -\frac{2 \hat{\delta} \left(a(a \hat{\delta}+b)^2+q(2a \hat{\delta}+b)\right)}{\left((a+b \hat{\delta})^2+q \hat{\delta} \right)\sqrt{(ab+q)\hat{\delta}}},\\
f_{11}(y) = \frac{W(y)}{\Gamma(y)^2}(1-y), \quad f_{12}(y) = \frac{Y(y)}{\Gamma(y)^2}y(1-y), \quad f_{22}(y) = \frac{Z(y)}{\Gamma(y)^2}y.
\end{gather}
\end{subequations}
Here we defined
\begin{subequations}
\begin{align}
    W(y) &= (b-a \hat{\delta})(a+b\hat{\delta})^3(ab+q)y^2+\hat{\delta}(ab+q)((a+b\hat{\delta})^2+q\hat{\delta})^2 \\
    &\hspace{.5cm}+ ((a^2-2ab\hat{\delta}-2b^2)q^2 + \hat{\delta}(a+b\hat{\delta})^2(3a^2-4ab\hat{\delta}-2b^2)q+a(a^2-2b^2)(a+b\hat{\delta})^3)y,\nonumber\\
    Y(y) &= (a-b\hat{\delta})(a+ b\hat{\delta})^3(ab+q)y + \hat{\delta}(a^2+ab\hat{\delta}+b^2)q^2\\
    &\hspace{.5cm}+\hat{\delta}b(b\hat{\delta}+2a)(a+b\hat{\delta})^2q+\hat{\delta}ab^2(a+b\hat{\delta})^3,\\
    Z(y) &= \hat{\delta}(a-b\hat{\delta})(a+b\hat{\delta})^3(ab+q)y^2 + (b(a\hat{\delta}+b)+q\hat{\delta})^3\nonumber\\
    &\hspace{.5cm}+ ((2a^2+2ab\hat{\delta}-b^2)q^2+(4a-b\hat{\delta})b(a+b\hat{\delta})^2q+(2a-b\hat{\delta})b^2(a+b\hat{\delta})^3)y,
\end{align}
\end{subequations}
where $a, b, q$ are real constants and
\be
\hat{\delta} = 
\begin{cases}
    +1, & ab+q>0\\
    -1, & ab+q<0
\end{cases}.
\ee
The extremal Myers-Perry solution is obtained by setting $q = 0$. Setting $ a = b$, we recover two two-parameter subfamilies of homogeneous solutions. 

\bibliographystyle{ourbst}
\bibliography{references}

\end{document}